\title{Can One Detect Whether a Wave Function Has Collapsed?}
\author{
Charles Wesley Cowan\footnote{Department of Mathematics, Rutgers University, Hill Center,  
     110 Frelinghuysen Road, Piscataway, NJ 08854-8019, USA.}\ \footnote{E-mail:
     cwcowan@math.rutgers.edu}\ \ and 
Roderich Tumulka$^*$\footnote{E-mail: tumulka@math.rutgers.edu}
}
\date{December 30, 2013}
\newcommand{\Hilbert}{\mathscr{H}}
\newcommand{\E}{\mathcal{E}}
\newcommand{\EEE}{\mathbb{E}}
\newcommand{\PPP}{\mathbb{P}}
\newcommand{\RRR}{\mathbb{R}}
\newcommand{\CCC}{\mathbb{C}}
\newcommand{\SSS}{\mathbb{S}}
\newcommand{\NNN}{\mathbb{N}}
\newcommand{\scp}[2]{\langle #1|#2 \rangle}
\newcommand{\pr}[1]{| #1 \rangle \langle #1 |}
\newcommand{\be}{\begin{equation}}
\newcommand{\ee}{\end{equation}}
\newcommand{\vv}{\boldsymbol{v}}
\newcommand{\vw}{\boldsymbol{w}}
\newcommand{\vsigma}{\boldsymbol{\sigma}}
\DeclareMathOperator{\tr}{tr}
\DeclareMathOperator{\diag}{diag}
\theoremstyle{plain}
	\newtheorem{prop}{Proposition}
	\newtheorem{theorem}{Theorem}
	\newtheorem{corollary}{Corollary}
\begin{document}
\maketitle

\begin{abstract}
Consider a quantum system prepared in state  $\psi$, a unit vector in a $d$-dimensional Hilbert space. Let $b_1,\ldots,b_d$ be an orthonormal basis and suppose that, with some probability $0<p<1$, $\psi$ ``collapses,'' i.e., gets replaced by $b_k$ (possibly times a phase factor) with Born's probability $|\scp{b_k}{\psi}|^2$. The question we investigate is: How well can any quantum experiment on the system determine afterwards whether a collapse has occurred? The answer depends on how much is known about the initial vector $\psi$. We provide a number of different results addressing several variants of the question. In each case, no experiment can provide more than rather limited probabilistic information. In case $\psi$ is drawn randomly with uniform distribution over the unit sphere in Hilbert space, no experiment performs better than a blind guess without measurement; that is, no experiment provides any useful information.

\medskip

Key words: collapse of the wave function; limitations to knowledge; absolute uncertainty; empirically undecidable; quantum measurements; foundations of quantum mechanics; Ghirardi-Rimini-Weber (GRW) theory; random wave function.
\end{abstract}

\section{Introduction}\label{sec:introduction}

We consider a quantum system whose wave function may or may not have collapsed, and ask whether experiments on the system can provide us with information about whether it has collapsed, either in the case we know the system's initial wave function or in the case we do not. 

The main motivation for this question \cite{GTZ07,CT12b} comes from the Ghirardi--Rimini--Weber (GRW) theory \cite{GRW86,AGTZ06} of quantum mechanics, which solves the paradoxes of quantum mechanics by replacing the Schr\"odinger equation with a stochastic process in which wave functions sometimes collapse in a random way, also without the intervention of an ``observer.'' As we elucidate in detail elsewhere \cite{GTZ07,CT12b}, the results presented here imply that the inhabitants of a universe governed by the GRW theory cannot discover all facts true of their universe---there are limitations to their knowledge. Specifically, they cannot measure the number of collapses in a given physical system during a given time interval, although this number is well defined; in fact, as we show, they cannot reliably find out whether any collapse at all has occurred in the system.

However, the questions that we investigate in this paper can also be considered in the framework of orthodox quantum mechanics and are, in our opinion, of interest in their own right. The basic type of question is as follows.

Consider a quantum system $S$ with Hilbert space $\Hilbert$ of finite dimension $d \in \NNN$, $d \geq 2$. Let
\begin{equation}\label{eqn:SSSdef}
\SSS = \{\psi \in \Hilbert: \|\psi\|=1\}
\end{equation}
denote the unit sphere in $\Hilbert$, and let $B = \{b_1, \ldots, b_d\}$ be an orthonormal basis of $\Hilbert$. Suppose that the ``initial'' wave function of $S$ was $\psi\in\SSS$ but with probability $p$ a collapse relative to $B$ has occurred. That is, suppose that the wave function of $S$ is the $\SSS$-valued random variable $\psi'$ defined to be
\begin{equation}\label{eqn:psi-prime-def}
\psi' = \begin{cases}
\psi & \text{with probability }1-p\\
\frac{\scp{b_k}{\psi}}{|\scp{b_k}{\psi}|}b_k & \text{with probability }p\,\bigl| \scp{b_k}{\psi} \bigr|^2\text{ for }k=1,\ldots,d\,.
\end{cases}
\end{equation}
Is there an experiment on $S$ that would reveal whether a collapse has occurred? Or at least provide probabilistic information about whether a collapse has occurred? What is the best experiment to obtain such information? We take $p$ and $B$ to be known;\footnote{Actually, the problem depends on $B$ only through its equivalence class, with the basis $\{e^{i\theta_1}b_1,\ldots,e^{i\theta_d}b_d\}$ regarded as equivalent to $B$ for arbitrary $\theta_1,\ldots,\theta_d\in\RRR$. So we take the equivalence class of $B$ (or, equivalently, the collection of $d$ 1-dimensional subspaces $\CCC b_k$) to be known; nevertheless, we often find it convenient to speak as if $B$ were given.} $\psi$ may or may not be known.

We may imagine the following story. Alice prepares $S$ with wave function $\psi \in \SSS$. The Hamiltonian of $S$ is 0. Alice leaves the room briefly. In her absence, with probability $0 < p < 1$, Bob enters the room. Bob performs on $S$ a quantum measurement of an observable with eigenbasis $B$, causing the wave function to collapse. Bob then sneaks back out. When Alice returns to the room, she wishes to know whether Bob has been there and tampered with her system.\footnote{In GRW theory, spontaneous collapses may replace Bob's intervention. If Alice lets $S$ sit for a while $t$ with zero Hamiltonian, the wave function collapses spontaneously, essentially relative to the position basis, with probability $p=1-\exp(N\lambda t)$, where $N$ is the number of particles in $S$ and $\lambda$ is a constant of nature that is in principle measurable; see \cite{CT12b} for more detail.} In short, she wants to determine whether or not $S$ has collapsed from its original state. To this end, Alice would like to perform an experiment on $S$. The difficulty Alice is faced with is the well-known problem of distinguishing between two non-orthogonal states---collapsed and non-collapsed. As the system cannot collapse to a state orthogonal to $\psi$, this difficulty is unavoidable. What Alice \textit{is} able to determine depends a great deal on what she knows about the initial state of $S$. We distinguish the following situations:

\begin{itemize}
\item[(i)] \textbf{Complete Information:} Alice knows the initial vector $\psi$.
\item[(ii)] \textbf{Partial Information:} Alice does not know $\psi$, but knows $\psi$ was sampled from $\SSS$ with known distribution $\mu$.
\item[(iii)] \textbf{No Information:} Alice knows nothing about $\psi$.
\end{itemize}

Note that (i) is in fact a special case of (ii), as a specific $\psi$ may be given via a delta distribution on $\SSS$. Nevertheless, it is a case worth distinguishing as in it we can present much stronger results. In this paper, we discuss (i) and (ii) in detail. Mathematically, case (iii) is of a very different flavor to (i) and (ii). Therefore, we discuss (iii) elsewhere \cite{CT12c} and report here only the main results.

The rest of the paper is organized as follows. In Sec.~\ref{sec:povms}, we set up the POVM that mathematically represents Alice's experiment. In Sec.~\ref{sec:complete}, we discuss our problem in the case that $\psi$ is known. In Sec.~\ref{sec:incomplete}, we discuss the case that $\psi$ is unknown but random with known distribution $\mu$. In Sec.~\ref{sec:no-information-experimentation}, we give a summary of our results in \cite{CT12c} about what is possible when Alice has no information about $\psi$.


\section{Mathematical Tools}

As a preparation, we describe some key facts and concepts that we will use.

\subsection{POVMs: A Mathematical Description of Experiments}
\label{sec:povms}

An experiment $\E$ is carried out on a system $S$ and yields a (usually random) outcome $Z$ in some value space $\mathcal{Z}$. A relevant fact for the mathematical treatment of our question is this: For every conceivable experiment $\E$ that can be carried out on $S$, there is a positive-operator-valued measure (POVM) $M(\cdot)$ on $\mathcal{Z}$ acting on $\Hilbert$ such that the probability distribution of $Z$, when $\E$ is carried out on $S$ with wave function $\psi\in\SSS$, is given by
\begin{equation}\label{eqn:POVM-probability-statement}
\PPP(Z\in\Delta) = \scp{\psi}{M(\Delta)|\psi}
\end{equation}
for all measurable sets $\Delta\subseteq\mathcal{Z}$. 

The statement containing \eqref{eqn:POVM-probability-statement} was proved for GRW theory in \cite{GTZ07} and for Bohmian mechanics in \cite{DGZ04}. In orthodox quantum mechanics, the theorem is true as well, taking for granted that, after $\E$, a quantum measurement of the position observable of the pointer of $\E$'s apparatus will yield the result of $\E$.

It is important to note that while every experiment $\E$ can be characterized in terms of a POVM, it is not necessarily true that every POVM is associated with a realizable experiment.  For our purposes, so as to answer the question ``Did collapse occur?'', it suffices to consider \emph{yes-no-experiments}, i.e., those with $\mathcal{Z}=\{\text{yes},\text{no}\}$; for them the POVM $M(\cdot)$ is determined by the operator
\begin{equation}\label{eqn:povm-yes-operator}
E=M(\{\text{yes}\})\,.
\end{equation}
$I-E$ is the operator corresponding to $\text{no}$, $M(\{\text{no}\})$. By the definition of a POVM, $E$ must be a positive\footnote{We take the word ``positive'' for an operator to mean $\braket{\psi|E|\psi}\geq 0$ for all $\psi\in\Hilbert$, equivalently to its matrix (relative to any orthonormal basis) being positive semi-definite; we  denote this by $E\geq 0$.} operator such that $I-E$ is positive too; it is otherwise arbitrary. Thus, \emph{we can characterize every possible yes-no experiment $\E$ mathematically by a self-adjoint operator $E$ with spectrum in $[0,1]$, $0 \leq E \leq I$.} As noted, this is a larger set than the class of ``realizable'' experiments, but by proving results over the set of POVMs (in this case of yes-no experiments, proving results over the set of self-adjoint operators with the appropriate spectrum), the results necessarily cover all possible realizable experiments.

\subsection{Reliability}

We define the \emph{reliability} of a yes-no experiment to be the probability that its outcome correctly answers our question---in this case, the probability that the experiment correctly determines whether collapse has occurred. We use this quantity as a measure for how well an experiment performs for our purpose. In a scenario in which the initial wave function $\psi$ is known (as well as the a-priori probability $p$ of collapse), the reliability of an experiment $\E$ with outcome $Z\in\{\text{yes},\text{no}\}$ is
\begin{equation}\label{eqn:psi-reliability-def}
R_{\psi, p}(\E) = \PPP( Z = \text{yes}, \text{collapse} ) + \PPP( Z = \text{no}, \text{no collapse}).
\end{equation}
The most basic result of this paper (Thm.~\ref{thm:complete-information-imperfect-reliability} below) asserts the impossibility of detecting a collapse with perfect reliability, i.e., $R_{\psi,p}(\E)<1$ for all experiments $\E$, all $\psi\in\SSS$, and all $0<p<1$.

\subsection{Helstrom's Theorem}

We may embed our problem of detecting collapse in a larger class of problems, that of distinguishing between two density matrices $\rho_1\neq \rho_2$. Consider the following story: Bob gives to Alice a system $S$; with probability $p$, he has prepared $S$ to have density matrix $\rho_1$, and with probability $1-p$, he has prepared $S$ to have density matrix $\rho_2$. Alice would like to perform an experiment on $S$ to determine, at least with high probability, which of the two density matrices was used (in this particular individual case). 

The problem of detecting whether collapse has occurred is included as a special case: If $\psi$ is known, then $\rho_2=\pr{\psi}$ and $\rho_1 = \diag \pr{\psi}$, where ``$\diag$'' is the diagonal part of an operator relative to the basis $\{b_1,\ldots,b_d\}$,
\be\label{diagdef}
\diag E = \sum_{k = 1}^d \ket{b_k}\braket{b_k|E|b_k} \bra{b_k}
\ee
for any operator $E$. If $\psi$ is random with known distribution $\mu$, then $\rho_2$ is the density matrix corresponding to $\mu$, and $\rho_1 = \diag \rho_2$.

Let $\E$ be Alice's experiment to be performed on $S$, with two possible outcomes: If $Z = 1$ then Alice guesses the density matrix was $\rho_1$, if $Z=2$ then $\rho_2$. The POVM associated with $\E$ consists of the operators $0\leq E_1\leq I$ and $E_2=I-E_1$.
We again define the reliability as the probability that the outcome of the experiment correctly retrodicts which density matrix was used. We find that it is
\begin{equation}\label{RHelstrom}
\begin{split}
R_{\rho_1, \rho_2, p}(\E) 
& = p \PPP( Z = 1 | \rho_1 ) + (1-p) \PPP( Z = 2 | \rho_2 ) \\
& = p \tr \left[ \rho_1 E_1 \right] + (1-p) \tr \left[ \rho_2 (I - E_1) \right ] \\
& = 1-p + \tr \left[ A E_1 \right]
\end{split}
\end{equation}
with
\begin{equation}\label{A-def-general}
A = p \rho_1 - (1-p) \rho_2\,.
\end{equation}
In particular, the reliability depends on $\E$ only through the operator $E_1$; that is, different experiments with equal $E_1$ have equal reliability. For this reason, we will, when convenient, write $R_{\rho_1, \rho_2, p}(E_1)$ 
instead of $R_{\rho_1, \rho_2, p}(\E)$.

The optimal $E_1$ and its reliability
\begin{equation}
R^{\max}_p(\rho_1, \rho_2) = \max_{0 \leq E_1 \leq I} R_{\rho_1, \rho_2, p}(E_1)
\end{equation}
can be characterized as follows.

\begin{theorem}[Helstrom \cite{Hel76}]\label{thm:rho1rho2}
For $0 \leq p \leq 1$ and any density matrices $\rho_1, \rho_2$,
\begin{equation}\label{general-optimality}
R^{\max}_p(\rho_1, \rho_2) = (1-p) + \lambda^+ = p - \lambda^-,
\end{equation}
where $\lambda^+\geq 0$ and $\lambda^-\leq 0$ are, respectively, the sum of the positive eigenvalues (with multiplicities) and that of the negative eigenvalues of $A$ as in \eqref{A-def-general}.
The optimal operators $E_1=E_\mathrm{opt}$ for which this maximum is attained, $R^{\max}_p(\rho_1, \rho_2) = R_{\rho_1, \rho_2, p}(E_\mathrm{opt})$, are those satisfying
\be\label{Eoptrho12}
P^+_A \leq E_\mathrm{opt} \leq P^+_A + P^0_A\,,
\ee
where $P^+_A$ is the projection onto the positive spectral subspace of $A$, i.e, onto the sum of all eigenspaces of $A$ with positive eigenvalues, and $P^0_A$ is the projection onto the kernel of $A$.
\end{theorem}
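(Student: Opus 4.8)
The plan is to reduce the optimization to a purely spectral computation. By \eqref{RHelstrom} we have $R_{\rho_1,\rho_2,p}(E_1) = 1-p+\tr[AE_1]$, so maximizing the reliability over all yes--no POVMs is the same as maximizing the linear functional $E_1 \mapsto \tr[AE_1]$ over the convex set $\{E_1 : 0 \le E_1 \le I\}$. Since $A=p\rho_1-(1-p)\rho_2$ is self-adjoint, I would diagonalize it, writing $A = \sum_j a_j P_j$ with distinct real eigenvalues $a_j$ and mutually orthogonal spectral projections $P_j$ summing to $I$, and then collect the $P_j$ according to the sign of $a_j$ into $P^+_A$, $P^0_A$, $P^-_A$ as in the statement. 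All of $P^+_A$, $P^0_A$, $P^+_A+P^0_A$ are legitimate POVM elements since projections satisfy $0\le P\le I$.

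First I would establish the upper bound. For $0\le E_1\le I$ one has $0 \le P_j E_1 P_j \le P_j$, hence, using cyclicity of the trace and $P_j^2=P_j$, $\;0 \le \tr[P_j E_1] \le \tr P_j = \dim P_j$. Therefore $\tr[AE_1] = \sum_j a_j \tr[P_j E_1] \le \sum_{a_j>0} a_j \dim P_j = \lambda^+$, the terms with $a_j<0$ contributing at most $0$. Thus $R^{\max}_p(\rho_1,\rho_2)\le 1-p+\lambda^+$. Taking $E_1=P^+_A$ gives $\tr[AP^+_A]=\sum_{a_j>0}a_j\dim P_j=\lambda^+$, so the bound is attained and $R^{\max}_p(\rho_1,\rho_2)=1-p+\lambda^+$. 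The second equality $1-p+\lambda^+ = p-\lambda^-$ is then immediate from $\lambda^+ + \lambda^- = \tr A = p\,\tr\rho_1-(1-p)\,\tr\rho_2 = 2p-1$.

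Next I would characterize the maximizers. If $\tr[AE_1]=\lambda^+$, then, comparing term by term with the bound above, we must have $\tr[P_j E_1]=\dim P_j$ for every $a_j>0$ and $\tr[P_j E_1]=0$ for every $a_j<0$. The operator $P_j(I-E_1)P_j$ is positive with trace zero in the first case, and $P_j E_1 P_j$ is positive with trace zero in the second case, so each vanishes; using the elementary fact that for $0\le F\le I$ and a unit vector $\phi$, $\braket{\phi|F|\phi}=1$ forces $F\phi=\phi$ and $\braket{\phi|F|\phi}=0$ forces $F\phi=0$ (take square roots of $F$ and of $I-F$), this means $E_1$ acts as the identity on $\mathrm{ran}\,P^+_A$ and as $0$ on $\mathrm{ran}\,P^-_A$. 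Self-adjointness of $E_1$ then forces $\mathrm{ran}\,P^0_A$ to be invariant, so in block form relative to $\mathrm{ran}\,P^+_A\oplus\mathrm{ran}\,P^0_A\oplus\mathrm{ran}\,P^-_A$ we get $E_1 = P^+_A + P^0_A E_1 P^0_A$ with $0\le P^0_A E_1 P^0_A\le P^0_A$, which is exactly $P^+_A \le E_\mathrm{opt} \le P^+_A+P^0_A$, i.e.\ \eqref{Eoptrho12}. Conversely, any $E_1$ satisfying \eqref{Eoptrho12} has $\braket{\phi|E_1|\phi}=1$ on $\mathrm{ran}\,P^+_A$ and $\braket{\phi|E_1|\phi}=0$ on $\mathrm{ran}\,P^-_A$ (squeeze between the two bounds), hence $\tr[P_jE_1]=\dim P_j$ for $a_j>0$ and $\tr[P_jE_1]=0$ for $a_j<0$, giving $\tr[AE_1]=\lambda^+$.

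I expect the only delicate point to be the passage from the scalar trace equalities back to the operator statements about $E_1$ --- specifically the little squeezing lemma that saturating $0\le\braket{\phi|F|\phi}\le 1$ pins down $F\phi$, and the verification that ``$E_1$ equals $1$ on $\mathrm{ran}\,P^+_A$ and $0$ on $\mathrm{ran}\,P^-_A$'' is equivalent to the operator inequality in \eqref{Eoptrho12}. Everything else is routine linear algebra. The result itself is of course classical (Helstrom \cite{Hel76}); the point here is simply to record a clean self-contained derivation in the present notation.
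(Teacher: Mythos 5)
Your proof is correct; the only thing to note is that the paper itself offers no proof of this theorem to compare against---it is stated with a citation to Helstrom's book and used as a black box. Your argument (reduce to maximizing the linear functional $E_1\mapsto\tr[AE_1]$ over $0\le E_1\le I$, bound it spectrally by $\lambda^+$, and recover the equality cases via the observation that a positive operator with zero trace vanishes together with the squeezing lemma for $0\le F\le I$) is the standard derivation and all steps, including the identity $\lambda^++\lambda^-=\tr A=2p-1$ giving the second equality in \eqref{general-optimality}, check out.
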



\section{Complete Information}
\label{sec:complete}

In this section, we operate under the assumption that Alice knows $\psi$ precisely, and thus has complete information about the initial state of $S$. Let $\E$ be a yes-no experiment and $E$ the operator associated with the outcome ``yes.'' The reliability is found, for example from \eqref{RHelstrom} using $\tr(X\diag Y) = \tr(Y \diag X)$, to be
\begin{equation}\label{eqn:reliability-psi-formula}
R_{\psi, p}(\E)= R_{\psi,p}(E) = p \braket{\psi | \diag E | \psi } + (1-p) \braket{\psi | I - E | \psi}\,.
\end{equation}

\subsection{Perfect Reliability Is Impossible}

\begin{theorem}\label{thm:complete-information-imperfect-reliability}
$R_{\psi, p}(E) < 1$ for all operators $0\leq E\leq I$, all $0<p<1$, and all $\psi\in \SSS$. That is, for $0<p<1$ and known $\psi$, there is no yes-no-experiment $\E$ that can correctly determine with probability $1$ whether or not $S$ has collapsed. 
\end{theorem}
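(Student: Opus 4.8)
The plan is to argue directly from the reliability formula \eqref{eqn:reliability-psi-formula},
\[
R_{\psi,p}(E)= p \braket{\psi | \diag E | \psi } + (1-p) \braket{\psi | I - E | \psi},
\]
by showing that each of the two expectation values is at most $1$ and that, for $0<p<1$, they cannot both equal $1$. The bound $\braket{\psi|I-E|\psi}\leq 1$ is immediate from $0\leq I-E\leq I$. For the other term I would use \eqref{diagdef} to write $\braket{\psi|\diag E|\psi}=\sum_{k=1}^d |\scp{b_k}{\psi}|^2\,\braket{b_k|E|b_k}$ and note that this is a convex combination (the weights $|\scp{b_k}{\psi}|^2$ sum to $\|\psi\|^2=1$) of the diagonal entries $\braket{b_k|E|b_k}$, each lying in $[0,1]$ since $0\leq E\leq I$; hence $\braket{\psi|\diag E|\psi}\leq 1$. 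Together these give $R_{\psi,p}(E)\leq p+(1-p)=1$.

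The core of the proof is then the equality analysis. Since $0<p<1$, the identity $R_{\psi,p}(E)=1$ would force \emph{both} $\braket{\psi|\diag E|\psi}=1$ and $\braket{\psi|E|\psi}=0$. Here I would invoke the elementary fact that if $F\geq 0$ and $\braket{\psi|F|\psi}=0$ then $F\psi=0$ (because $\|F^{1/2}\psi\|^2=\braket{\psi|F|\psi}=0$). Applied to $F=E$ this gives $E\psi=0$. On the other hand, $\braket{\psi|\diag E|\psi}=1$ means the convex combination above attains its maximal value, so $\braket{b_k|E|b_k}=1$, equivalently $\braket{b_k|(I-E)|b_k}=0$, for every $k$ with $\scp{b_k}{\psi}\neq 0$; applying the same fact to $F=I-E$ yields $(I-E)b_k=0$, i.e.\ $Eb_k=b_k$, for all such $k$. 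Expanding $\psi=\sum_{k}\scp{b_k}{\psi}\,b_k$ (a sum effectively over $k$ with $\scp{b_k}{\psi}\neq 0$) and applying $E$ then gives $E\psi=\sum_k \scp{b_k}{\psi}\,b_k=\psi$. Combining $E\psi=0$ with $E\psi=\psi$ forces $\psi=0$, contradicting $\psi\in\SSS$; therefore $R_{\psi,p}(E)<1$.

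I do not anticipate a real obstacle, only two points that need care: making explicit that $0<p<1$ is used (for $p=0$ or $p=1$ one term drops out and reliability $1$ is trivially attainable, so the hypothesis is essential), and that the ``zero expectation $\Rightarrow$ annihilation'' fact is applied twice, once to $E$ and once to $I-E$. As a consistency check, one can also derive the result from Helstrom's Theorem~\ref{thm:rho1rho2} applied to $\rho_1=\diag\pr{\psi}$, $\rho_2=\pr{\psi}$: since $A=p\diag\pr{\psi}-(1-p)\pr{\psi}$ is a rank-one perturbation of the positive operator $p\diag\pr{\psi}$, it has at most one negative eigenvalue, and a short argument shows that eigenvalue is $>p-1$, so $\lambda^->p-1$ and $\lambda^+=\tr A-\lambda^-<(2p-1)-(p-1)=p$, giving $R^{\max}_p(\rho_1,\rho_2)=(1-p)+\lambda^+<1$; the direct computation is shorter, though, so I would present that.
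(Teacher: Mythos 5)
Your proof is correct. It follows the same overall strategy as the paper---expand the reliability into the ``collapse'' and ``no collapse'' terms via $R_{\psi,p}(E)=p\braket{\psi|\diag E|\psi}+(1-p)\braket{\psi|I-E|\psi}$ and show that for $0<p<1$ both terms cannot simultaneously be maximal---but the equality analysis is carried out differently, and in a way that is arguably cleaner. The paper first reduces, by a preliminary dimension-cutting argument, to the case $\scp{b_k}{\psi}\neq 0$ for all $k$; it then argues that a vanishing false-negative rate forces $\braket{b_k|E|b_k}=1$ for every $k$, hence $\tr E=d$, hence (since the spectrum of $E$ lies in $[0,1]$) $E=I$, and finally computes that $E=I$ has false-positive rate $1-p>0$. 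You instead apply the lemma ``$F\geq 0$ and $\braket{\psi|F|\psi}=0$ imply $F\psi=0$'' twice---once to $E$ to get $E\psi=0$, and once to $I-E$ at each basis vector $b_k$ with $\scp{b_k}{\psi}\neq 0$ to get $Eb_k=b_k$, hence $E\psi=\psi$---and conclude $\psi=0$, a contradiction. This buys you two things: you never need the global conclusion $E=I$ (only the action of $E$ on the relevant $b_k$ and on $\psi$), and the case where some $\scp{b_k}{\psi}=0$ is absorbed automatically rather than handled by a separate reduction, since those components contribute nothing to $E\psi=\sum_k\scp{b_k}{\psi}Eb_k$. Your use of $0<p<1$ is exactly where the paper uses it, and your side remark that the result also follows from Helstrom's theorem is consistent with the paper's later, sharper Theorem~\ref{thm:complete-information-optimality}.
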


\begin{proof}
Without loss of generality, we may take $\braket{b_k | \psi} \neq 0$ for all $k = 1, \ldots, d$. If this did not hold for some $b_k$, that $b_k$ lies orthogonally to the initial state of the system. As such, collapse to $b_k$ occurs with probability $0$, and such an event may be excluded from consideration. The subspace generated by $b_k$ may in that case be ignored, and the problem treated in a smaller dimension. In the extreme event that only one $\braket{b_k|\psi}$ is nonzero, a collapse will leave $\psi$ unchanged, so it is obviously impossible to determine whether collapse has occurred; in fact, $R_{\psi=b_k,p}(E) \leq \max(p,1-p)<1$.

Assume now $\braket{b_k|\psi}\neq 0$.
The probability of $\E$ giving a false negative is 
\begin{equation}
\begin{split}
\PPP(\text{false negative}) & = \PPP( Z = \text{no}, \text{collapse}) \\
& = \PPP(\text{collapse}) \PPP(Z = \text{no}| \text{collapse}) \\
& = p \bigl(1 - \PPP(\text{yes} | \text{collapse})\bigr) \\
& = p \bigl(1 - \braket{\psi|\diag E|\psi} \bigr) \\
& = p \Bigl(1 -  \sum_{k = 1}^d \braket{b_k | E |b_k} \bigl| \scp{b_k}{\psi} \bigr|^2 \Bigr).\\
\end{split}
\end{equation}
Given that $p > 0$, a false negative rate of 0 requires that $\sum_{k = 1}^d \braket{b_k | E |b_k} \bigl| \scp{b_k}{\psi} \bigr|^2  = 1$. However, $0 \leq \braket{b_k | E |b_k} \leq 1$ for each $k$. Since $\sum_{k = 1}^d \bigl| \scp{b_k}{\psi} \bigr|^2  = \|\psi\|^2 = 1$, a false negative rate of 0 requires $\braket{b_k | E |b_k} = 1$ for each $k$.
This in turn forces $\tr E = d$. Since the eigenvalues of $E$ are restricted to $[0,1]$, all eigenvalues of $E$ must be $1$, hence $E$ must be the identity $I$. However, $E = I$ gives a false positive probability of
\begin{equation}
\begin{split}
\PPP(\text{false positive}) & = \PPP(Z = \text{yes}, \text{no collapse}) \\
& = \PPP(\text{no collapse}) \PPP( Z = \text{yes} | \text{no collapse}) \\
& = (1-p) \braket{ \psi | E | \psi } \\
& = (1-p) \braket{ \psi | I | \psi } \\
& = (1-p) \|\psi\|^2 \\
& = 1-p\,. 
\end{split}
\end{equation}
Therefore, a false negative rate of $0$ forces a false positive rate of $1 - p > 0$. The probability of an incorrect outcome can never be made $0$, unless collapse is guaranteed or forbidden. 
\end{proof}

In Thm.~\ref{thm:complete-information-imperfect-reliability}, allowing experiments with more than two outcomes obviously does not improve the situation.

Defining
\begin{equation}\label{eqn:r-max-psi}
R^{\max}_p(\psi) = \max_{0 \leq E \leq I} R_{\psi, p}(E),
\end{equation}
Thm.~\ref{thm:complete-information-imperfect-reliability} means that $R^{\max}_p(\psi) < 1$.


\subsection{Blind Guessing: The Trivial Experiment}\label{sec:blind-guessing}

We consider the following ``trivial'' experiments: Independently of what Alice actually knows about the initial state of $S$, she declares that collapse has occurred. This corresponds to taking $E = I$. Alternately, independently of what Alice knows about the initial state of $S$, she declares that collapse has not occurred. This corresponds to taking $E = 0$. Since collapse occurs with probability $p$, declaring collapse has occurred every time will be correct with probability $p$. Similarly, declaring collapse never occurs will be correct with probability $1-p$.

We may combine these approaches into a single experiment, $\E_\emptyset$, which we will refer to as blind guessing. If collapse is more probable, declare collapse. Else, declare no collapse. We define $\E_\emptyset$ such that
\begin{equation}\label{eqn:blind-guessing-operator}
E_\emptyset = \begin{cases}
0 & \text{if } p \leq 1/2 \\
I & \text{if } p > 1/2 \end{cases}.
\end{equation}
Since $R_{\psi, p}(\E_\emptyset) = p$ for $p \leq 1/2$ and $R_{\psi, p}(\E_\emptyset) = 1-p$ for $p > 1/2$, we have that 
\begin{equation}
R_{\psi, p}(\E_\emptyset) = \max(p, 1-p).
\end{equation}
This function of $p$ is depicted in Fig.~\ref{fig4}.

\begin{figure}[h]
\begin{center}
\includegraphics[width=5cm]{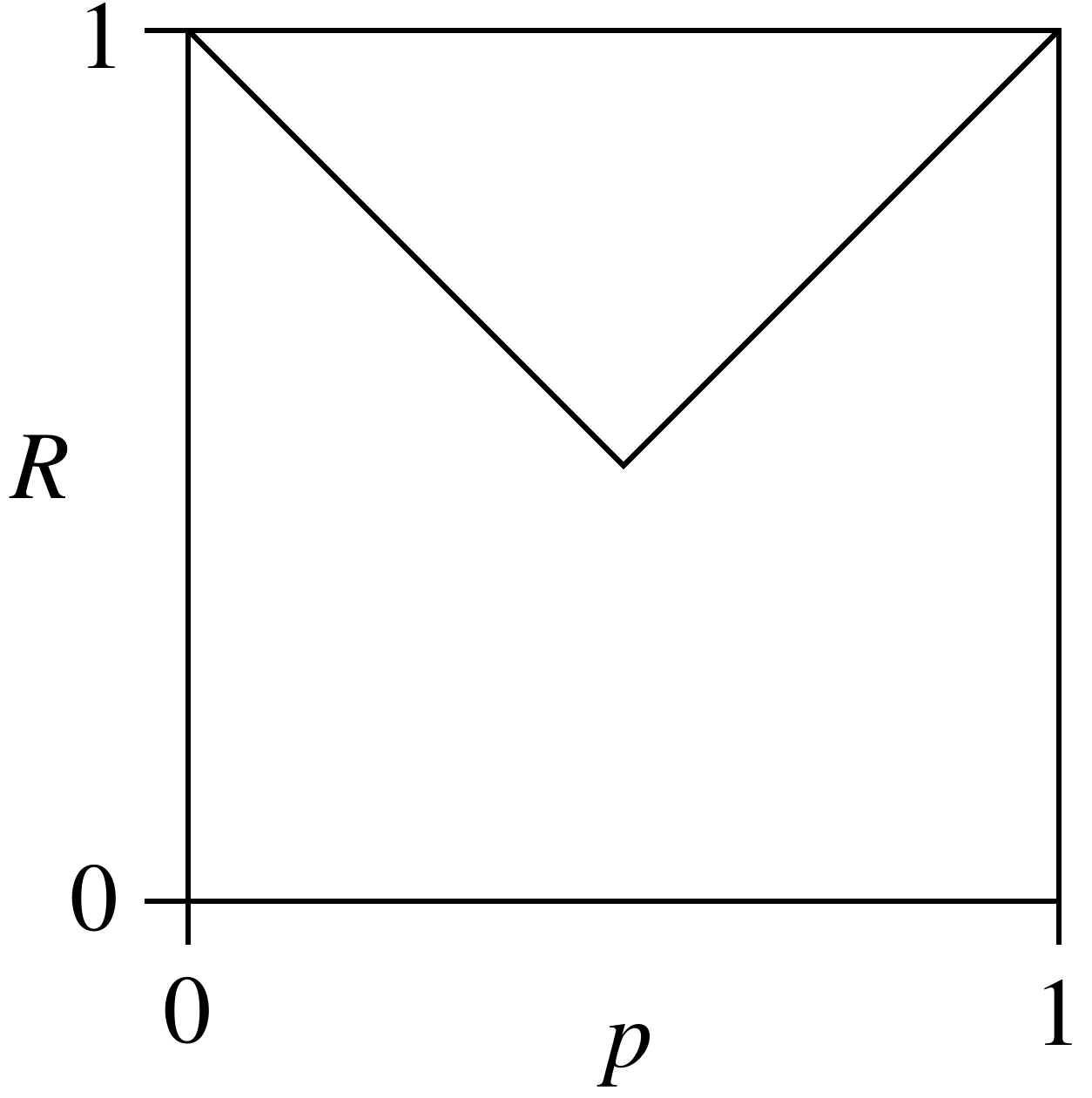}
\end{center}
\caption{
Graph of the reliability of blind guessing, $R_{\psi,p}(\E_\emptyset)$, as a function of $p$.}
\label{fig4}
\end{figure}

The adjective ``trivial'' is warranted in this case because this experiment requires no measurement or observation on the part of Alice---stretching, indeed, the notion of ``experiment.'' The result will be the same, independent of the actual state of the system. Because of this, the reliability is independent of any knowledge about the initial state. This yields a lower bound on $R^{\max}$,
\begin{equation}\label{RmaxbdE0}
R^{\max}_p(\psi)\geq \max(p,1-p)\,.
\end{equation}
Perhaps surprisingly, it is also sometimes an upper bound:

\begin{prop}\label{prop:no-better-than-guessing}
For $p \geq d/(d+1)$, no experiment is more reliable than blind guessing: $R^{\max}_p(\psi) = p$.
\end{prop}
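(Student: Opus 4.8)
The plan is to invoke Helstrom's Theorem (Thm.~\ref{thm:rho1rho2}) with $\rho_2=\pr{\psi}$ and $\rho_1=\diag\pr{\psi}$, so that $A=p\,\diag\pr{\psi}-(1-p)\pr{\psi}$, and to show that $A\geq 0$ precisely when $p\geq d/(d+1)$. The first step is to notice that $\tr A=p\,\tr(\diag\pr{\psi})-(1-p)=p-(1-p)=2p-1$, so $\lambda^++\lambda^-=2p-1$ and hence $\lambda^+=2p-1-\lambda^-\geq 2p-1$, with equality iff $\lambda^-=0$, i.e.\ iff $A$ has no negative eigenvalue. By \eqref{general-optimality} this means $R^{\max}_p(\psi)=(1-p)+\lambda^+\geq p$, consistently with \eqref{RmaxbdE0} since $p\geq d/(d+1)\geq 1/2$, and $R^{\max}_p(\psi)=p$ iff $A\geq 0$. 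So everything reduces to showing $A\geq 0$ for $p\geq d/(d+1)$.

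As in the proof of Thm.~\ref{thm:complete-information-imperfect-reliability}, I would dispatch the degenerate case first: if $a_k:=|\scp{b_k}{\psi}|^2=0$ for some $k$, then $\psi$ and $\diag\pr{\psi}$ (hence $A$) are supported on the span $\Hilbert'$ of those $b_k$ with $a_k>0$, say $\dim\Hilbert'=d'<d$; then $A$ is block-diagonal with a zero block on $(\Hilbert')^\perp$, so the eigenvalues of $A$ are those of the corresponding $d'$-dimensional problem together with zeros. Since $d'/(d'+1)\leq d/(d+1)\leq p$, it suffices to treat the nondegenerate case, where $D:=\diag\pr{\psi}=\sum_k a_k\pr{b_k}$ is invertible.

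In the nondegenerate case, $D^{1/2}$ is invertible and self-adjoint, so $A\geq 0$ iff $D^{-1/2}AD^{-1/2}\geq 0$, where
\begin{equation}
D^{-1/2}AD^{-1/2}=p\,I-(1-p)\,D^{-1/2}\pr{\psi}D^{-1/2}=p\,I-(1-p)\,\pr{D^{-1/2}\psi}\,.
\end{equation}
Now $\pr{D^{-1/2}\psi}$ has rank one, with sole nonzero eigenvalue $\|D^{-1/2}\psi\|^2=\braket{\psi|D^{-1}|\psi}=\sum_{k=1}^{d}a_k^{-1}|\scp{b_k}{\psi}|^2=\sum_{k=1}^{d}1=d$. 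Hence $D^{-1/2}AD^{-1/2}$ has eigenvalues $p$ (multiplicity $d-1$) and $p-(1-p)d$, and it is positive semi-definite iff $p-(1-p)d\geq 0$, i.e.\ iff $p\geq d/(d+1)$. Combined with the first step, this yields $R^{\max}_p(\psi)=p$ for $p\geq d/(d+1)$.

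The computation $\braket{\psi|D^{-1}|\psi}=d$ is the crux of the argument and is immediate once one thinks to conjugate $A$ by $D^{-1/2}$; there is no serious obstacle, only the bookkeeping for $\psi$ with vanishing components (handled by the dimension reduction) and the trace identity $\tr A=2p-1$ that turns ``$R^{\max}_p(\psi)=p$'' into ``$A\geq 0$''. As a byproduct one reads off from \eqref{Eoptrho12} that for $p>d/(d+1)$ the operator $A$ is strictly positive, so $E_\mathrm{opt}=I$ is the \emph{unique} optimizer: blind guessing is then literally the only optimal experiment.
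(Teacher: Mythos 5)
Your proof is correct, but it reaches the key fact---that $A=p\,\diag\pr{\psi}-(1-p)\pr{\psi}$ has no negative eigenvalue when $p\geq d/(d+1)$---by a genuinely different route than the paper. The paper derives Prop.~\ref{prop:no-better-than-guessing} as a corollary of Thm.~\ref{thm:complete-information-optimality}, whose proof analyzes the secular equation for a non-positive eigenvalue $\alpha$ of $A$ and shows it is equivalent to $f_\psi(-\alpha/p)=p/(1-p)$ with $f_\psi$ a decreasing bijection onto $(0,d]$, so that no solution exists once $p/(1-p)>d$. You instead conjugate by $D^{-1/2}$ (with $D=\diag\pr{\psi}$ invertible in the nondegenerate case) and use Sylvester's law of inertia: since $D^{-1/2}AD^{-1/2}=pI-(1-p)\pr{D^{-1/2}\psi}$ and $\|D^{-1/2}\psi\|^2=\sum_k|\psi_k|^2/|\psi_k|^2=d$, the signs of the eigenvalues are read off immediately as $p$ ($d-1$ times) and $p-(1-p)d$. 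Your reduction of ``$R^{\max}_p(\psi)=p$'' to ``$A\geq 0$'' via $\tr A=2p-1$ and Helstrom is also fine (indeed $R^{\max}=p-\lambda^-$ gives it in one line), and your dimension-reduction for vanishing components is handled correctly, including monotonicity of $d'/(d'+1)$. The trade-off: your congruence argument is shorter and self-contained for the threshold question, but it only controls the inertia of $A$, not its actual spectrum, so it cannot yield the explicit formula \eqref{eqn:optimal-psi-reliability} for $p<d/(d+1)$ that the paper's secular-equation analysis delivers as part of the same computation. One small caveat on your closing remark: the strict positivity of $A$ for $p>d/(d+1)$, and hence the uniqueness of $E_\mathrm{opt}=I$, holds only in the nondegenerate case; if some $\psi_k=0$ then $A$ has a nontrivial kernel and \eqref{Eoptrho12} permits other optimizers.
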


This will follow from Thm.~\ref{thm:complete-information-optimality} below.


\subsection{Optimal Experiment}\label{sec:complete-information-experimentation}

Helstrom's theorem yields the optimal $E$ and the maximal reliability for given $\psi$ and $p$ as follows. Examples of $R^{\max}$ as a function of $p$ (for fixed $\psi$) are shown in Fig.~\ref{fig56}.

\begin{theorem}\label{thm:complete-information-optimality}
Let $0<p<1$ and $\psi\in\SSS$ with $\psi_k:=\scp{b_k}{\psi}\neq 0$ for all $k=1,\ldots,d$. Then
\begin{equation}\label{eqn:optimal-psi-reliability}
R^{\max}_p(\psi) = \begin{cases}
p & \text{if } p \geq d/(d+1) \\
p(1 + f^{-1}_\psi(\frac{p}{1-p})) & \text{if } p < d/(d+1) \end{cases},
\end{equation}
where $f_{\psi} : [0, \infty) \to (0,d]$ is the bijection given by
\begin{equation}\label{eqn:f-def}
f_\psi(z) = \sum_{k = 1}^d \frac{ |\psi_k|^2 }{z + |\psi_k|^2}.
\end{equation}
The optimal operators $E_\mathrm{opt}$ for which this maximum is attained, $R^{\max}_p(\psi) = R_{\psi, p}(E_\mathrm{opt})$, are
\begin{equation}\label{optimal-psi-E}
E_\mathrm{opt} = \begin{cases}
I & \text{if } p > d/(d+1), \\
I-\kappa\pr{\phi} & \text{if } p = d/(d+1), \\
I - \pr{\phi} & \text{if } p < d/(d+1), \end{cases}
\end{equation}
where $\kappa\in[0,1]$ is arbitrary and $\phi$ is the unique (up to a phase factor) normalized eigenvector of the unique non-positive eigenvalue of the operator
\begin{equation}\label{Adef}
A = p \bigl(\diag \pr{\psi}\bigr) - (1-p) \pr{\psi}.
\end{equation}
\end{theorem}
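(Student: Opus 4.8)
The plan is to specialize Helstrom's theorem (Thm.~\ref{thm:rho1rho2}) to $\rho_1=\diag\pr{\psi}$ and $\rho_2=\pr{\psi}$, for which the operator $A$ of that theorem is exactly the operator in \eqref{Adef}; then $R^{\max}_p(\psi)=(1-p)+\lambda^+$ with $\lambda^+$ the sum of the positive eigenvalues of $A$, and the maximizing operators are precisely those obeying $P^+_A\le E\le P^+_A+P^0_A$. So the whole statement reduces to a spectral analysis of
\[
A=pD-(1-p)\pr{\psi},\qquad D:=\diag\pr{\psi}=\sum_{k=1}^d|\psi_k|^2\,\pr{b_k},
\]
which is a rank-one negative perturbation of the (strictly positive, since all $\psi_k\neq0$) operator $pD$.

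First I would pin down the non-positive part of the spectrum. Subtracting a rank-one positive operator from $pD>0$ creates at most one negative eigenvalue, so $A$ has at most one eigenvalue $\le0$, counted with multiplicity. Which case occurs is governed by the sign of $\det A$, which the matrix determinant lemma evaluates to
\[
\det A=\det(pD)\bigl(1-(1-p)\scp{\psi}{(pD)^{-1}\psi}\bigr)=p^{\,d-1}\Bigl(\prod_{k=1}^d|\psi_k|^2\Bigr)\bigl((d+1)p-d\bigr),
\]
using $\scp{\psi}{(pD)^{-1}\psi}=d/p$. Hence for $p>d/(d+1)$ we have $\det A>0$, so $A>0$; for $p<d/(d+1)$ we have $\det A<0$, so $A$ has exactly one, necessarily simple, negative eigenvalue; and for $p=d/(d+1)$ we have $\det A=0$ together with $A\ge0$ (equivalently $(1-p)\scp{\psi}{(pD)^{-1}\psi}=(1-p)d/p\le1$, the standard criterion for $pD-(1-p)\pr{\psi}\ge0$), so $0$ is a simple eigenvalue. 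For $p\le d/(d+1)$, write the non-positive eigenvalue as $\mu$ with eigenvector $\phi$ and set $c:=\scp{\psi}{\phi}$; componentwise, $A\phi=\mu\phi$ reads $(p|\psi_k|^2-\mu)\phi_k=(1-p)c\,\psi_k$, so (as $\mu\le0<p|\psi_k|^2$) we get $c\neq0$, $\phi_k\propto\psi_k/(p|\psi_k|^2-\mu)$, and the compatibility relation $c=\sum_k\bar\psi_k\phi_k$ becomes, after substituting $\mu=-py$,
\[
1=\frac{1-p}{p}\sum_{k=1}^d\frac{|\psi_k|^2}{|\psi_k|^2+y}=\frac{1-p}{p}\,f_\psi(y).
\]
Since each summand in \eqref{eqn:f-def} strictly decreases in $y$ with $f_\psi(0)=d$ and $f_\psi(y)\to0$, $f_\psi$ is a bijection $[0,\infty)\to(0,d]$, and the last equation has the unique solution $y=f_\psi^{-1}(p/(1-p))$, which makes sense precisely because $p\le d/(d+1)$ puts $p/(1-p)$ into $(0,d]$. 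Thus $\mu=-p\,f_\psi^{-1}(p/(1-p))$ with eigenvector $\phi_k\propto\psi_k/(p|\psi_k|^2-\mu)$ (at $p=d/(d+1)$ this is $\propto1/\bar\psi_k$, matching $\ker A$).

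It remains to assemble the pieces. Because $\tr D=\|\psi\|^2=1$, we have $\tr A=2p-1$, hence $\lambda^+=\tr A-\mu$ equals $2p-1$ when $p\ge d/(d+1)$ (no negative eigenvalue) and $2p-1+p\,f_\psi^{-1}(p/(1-p))$ when $p<d/(d+1)$; substituting into $R^{\max}_p(\psi)=(1-p)+\lambda^+$ gives exactly \eqref{eqn:optimal-psi-reliability}. For the optimizers, the spectral picture yields: $P^+_A=I$, $P^0_A=0$ for $p>d/(d+1)$, so $E_{\mathrm{opt}}=I$; $P^+_A=I-\pr{\phi}$, $P^0_A=\pr{\phi}$ for $p=d/(d+1)$, and since $0\le X\le\pr{\phi}$ forces $X=\kappa\pr{\phi}$ with $\kappa\in[0,1]$ (for $v\perp\phi$, $0\le\scp{v}{Xv}\le\scp{v}{\pr{\phi}v}=0$ gives $Xv=0$), we get $E_{\mathrm{opt}}=I-\kappa\pr{\phi}$; and $P^+_A=I-\pr{\phi}$, $P^0_A=0$ for $p<d/(d+1)$, so $E_{\mathrm{opt}}=I-\pr{\phi}$ uniquely. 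This is \eqref{optimal-psi-E}.

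The step that requires the most care---and the one I expect to be the main obstacle---is the possibility that $D$ has repeated entries, so that $A$ is a rank-one perturbation of a diagonal matrix whose eigenvalues are not all simple and the componentwise bookkeeping above is incomplete. If $|\psi_k|^2$ occurs with multiplicity $m$, then the $(m-1)$-dimensional space of vectors supported on those coordinates and orthogonal to $\psi$ consists of eigenvectors of $A$ with the strictly positive eigenvalue $p|\psi_k|^2$. These extra eigenvalues are positive, so they do not affect the count ``at most one eigenvalue $\le0$'', they do not affect $\lambda^+$ (which we only ever extract via $\tr A$ and $\mu$), and they are absorbed into $P^+_A$; hence the argument above is unchanged, and I would simply insert this remark where the inertia claim is made.
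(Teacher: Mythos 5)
Your proof is correct and follows the paper's strategy: specialize Helstrom's theorem to $\rho_1=\diag\pr{\psi}$, $\rho_2=\pr{\psi}$, and analyze the spectrum of the rank-one perturbation $A=p\,\diag\pr{\psi}-(1-p)\pr{\psi}$, arriving at exactly the same secular equation $f_\psi(-\mu/p)=p/(1-p)$ for the non-positive eigenvalue and the same formula $\phi\propto M^{-1}\psi$ for its eigenvector. The one place you genuinely diverge is in counting the non-positive eigenvalues: the paper extracts everything from the secular equation itself (any non-positive eigenvalue must solve it, the solution is unique and exists precisely when $p/(1-p)\leq d$, and the eigenvector is forced onto a single ray, so the multiplicity is one), whereas you first establish ``at most one eigenvalue $\leq 0$'' from positivity of $A$ on the codimension-one subspace $\psi^\perp$ and then decide which case occurs from the sign of $\det A$ computed via the matrix determinant lemma. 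Your route makes the \emph{existence} of the negative eigenvalue (not merely its uniqueness) fully explicit, which the paper leaves to the reader to confirm by reversing its steps; the paper's route avoids the determinant computation entirely. Your closing worry about repeated values among the $|\psi_k|^2$ is harmless but unnecessary: the componentwise derivation of the non-positive eigenpair never uses distinctness of the diagonal entries, which is why the paper does not address that case either.
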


\begin{figure}[h]
\begin{center}
\includegraphics[width=5cm]{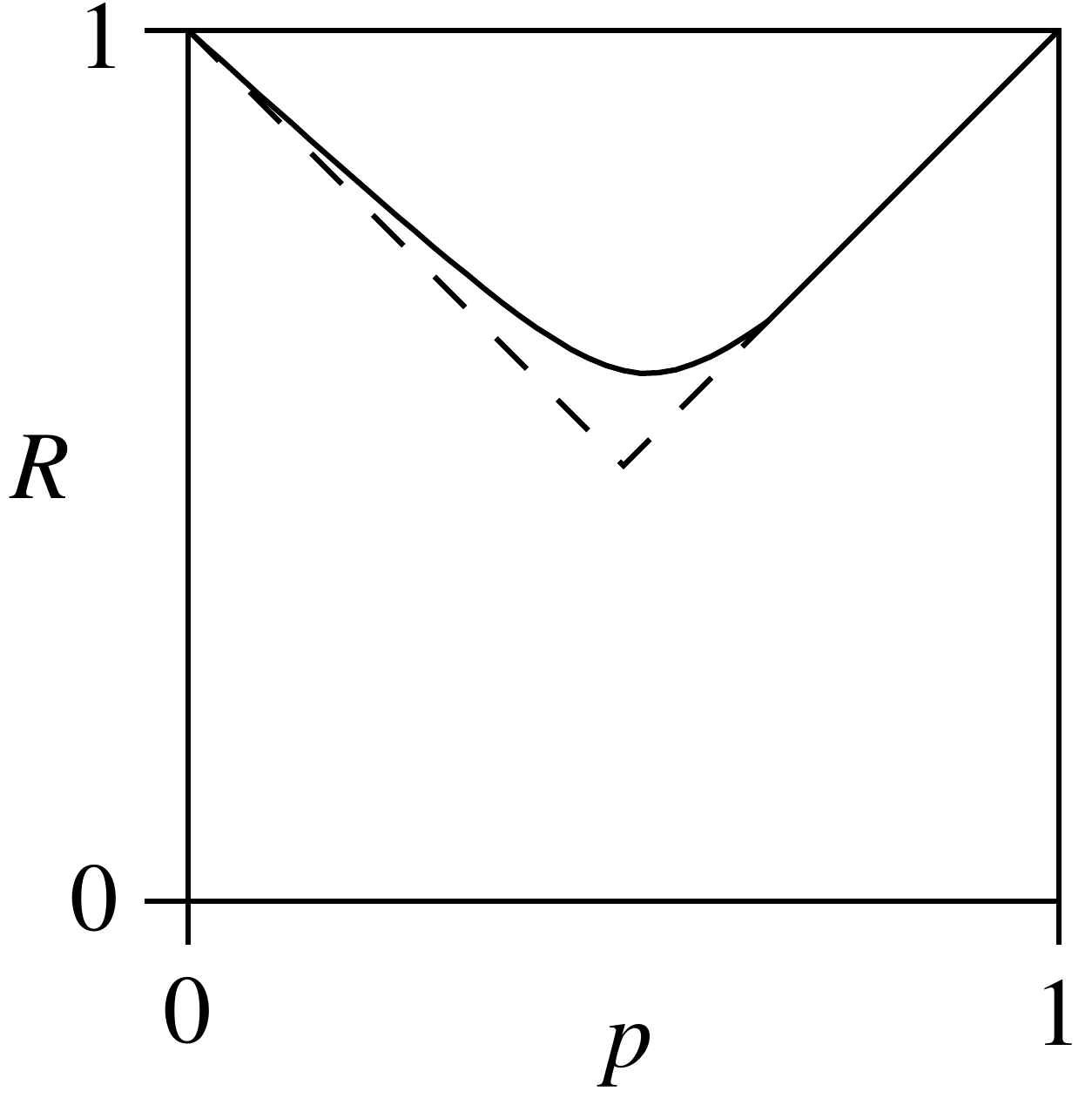}\quad\quad
\includegraphics[width=5cm]{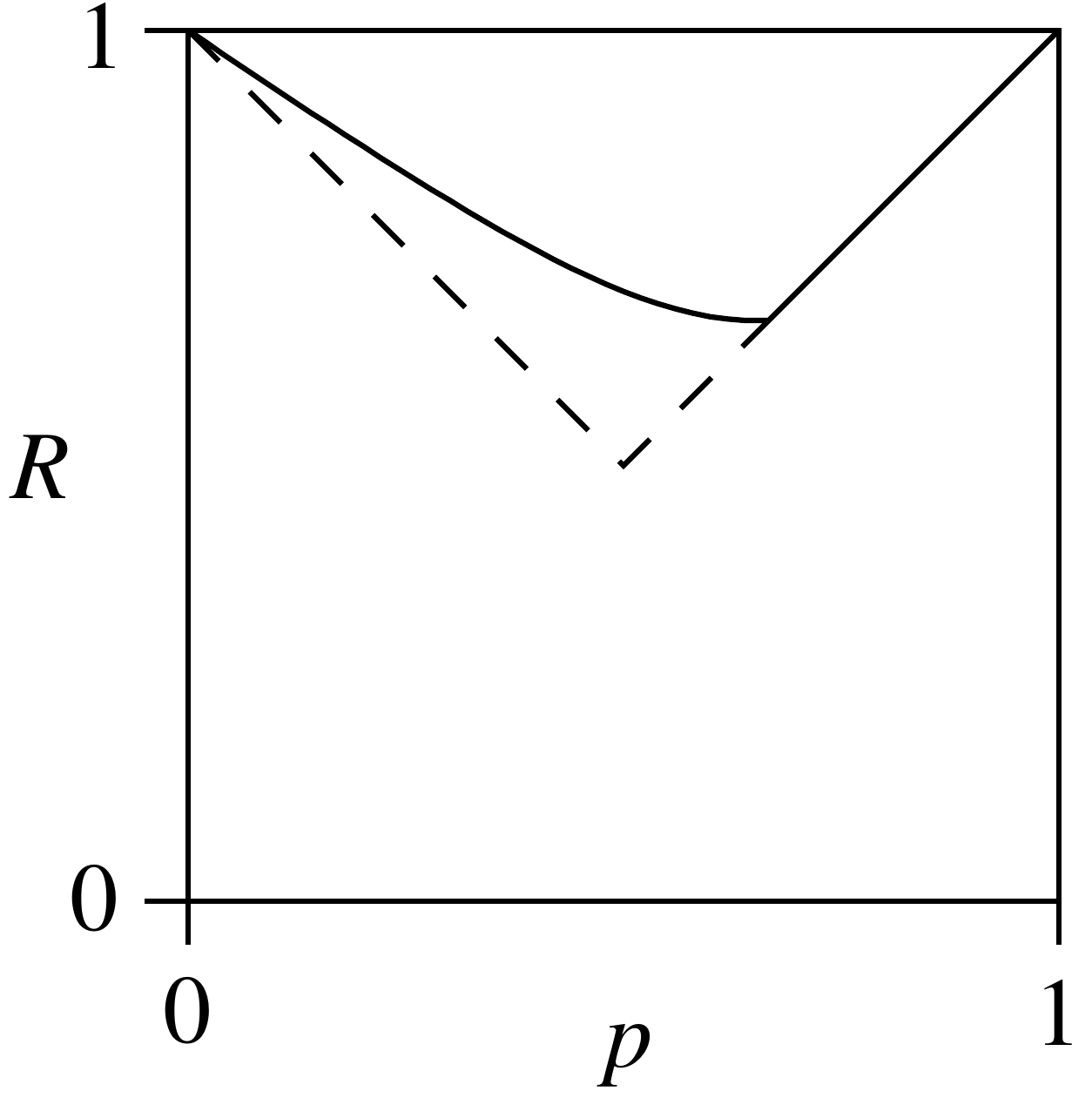}
\end{center}
\caption{
Examples of graphs of $R^{\max}_p(\psi)$ as a function of $p$ for (LEFT) $\psi= \sqrt{0.05}\,b_1+\sqrt{0.95}\,b_2$ and (RIGHT) $\psi=\sqrt{0.2}\,b_1+\sqrt{0.8}\,b_2$. Dashed lines: reliability of blind guessing, $R_{\psi,p}(\E_\emptyset)$, as a function of $p$.
}
\label{fig56}
\end{figure}

\medskip

\begin{proof}
In our situation, with $\rho_2=\pr{\psi}$ and $\rho_1=\diag\pr{\psi}$, the $A$ operator referred to in Helstrom's theorem and defined in \eqref{A-def-general} is just the one given by \eqref{Adef}. We first show that for $p > d/(d+1)$, $A$ has no non-positive eigenvalue, and for $p\leq d/(d+1)$ it has exactly one, which is non-degenerate and is $0$ for $p=d/(d+1)$ and negative for $p< d/(d+1)$.

Indeed, suppose that $\alpha$ is a non-positive eigenvalue of $A$, with eigenvector $\ket{\phi}$, $A \ket{\phi} = \alpha \ket{\phi}$. In that case, we have that 
\begin{equation}\label{eqn:psi-eigenvalue-structure-derivation}
\begin{split}
\Bigl(p \diag \pr{\psi} - (1-p) \pr{\psi}\Bigr) \ket{\phi} &= \alpha \ket{\phi} \\
-(1-p) \braket{\psi|\phi} \ket{\psi} &= -p\Bigl(-\frac{\alpha}{p} I + \diag \pr{\psi}\Bigr) \ket{\phi}
\end{split}
\end{equation}
Defining $M := -\frac{\alpha}{p} I + \diag \pr{\psi}$ 
and $\beta := (1-p) \braket{\psi|\phi}$, we have
\begin{equation}
- \beta \ket{\psi} = -p M \ket{\phi}.
\end{equation}
Note that $M$ is a diagonal matrix with strictly positive entries (as $-\alpha\geq 0$ and $|\psi_k|^2>0$), and is therefore invertible. As a result, $\beta \neq 0$, and $\phi$ and $\psi$ are not orthogonal. Moreover, we may write 
\begin{equation}\label{phibetapM-1psi}
\ket{\phi} = \frac{\beta}{p} M^{-1} \ket{\psi}.
\end{equation}
Hitting this with $\bra{\psi}$, and noting that $\braket{\psi|\phi} = \beta/(1-p)$, 
\begin{equation}
\frac{\beta}{1-p} = \frac{\beta}{p} \braket{\psi|M^{-1}|\psi},
\end{equation}
or 
\begin{equation}\label{alphacondition}
\begin{split}
\frac{p}{1-p} & = \braket{\psi|M^{-1}|\psi}\\
& = \sum_{k = 1}^d \frac{ |\psi_k|^2 }{ -\alpha/p + |\psi_k|^2 } \\
& = f_\psi( -\alpha/p ).
\end{split}
\end{equation}
For $z \geq 0$, $z\mapsto f_\psi(z)$ is continuous and stricly decreasing; its infimum is $\lim_{z\to \infty} f_\psi(z)=0$, its supremum is $f_\psi(0)=d$ (recall that we assumed that all $\psi_k\neq 0$); $f_\psi$ is thus a bijection $[0,\infty)\to (0,d]$. Hence, for $p/(1-p) \leq d$ (or, equivalently, $p\leq d/(d+1)$), the inverse $f_\psi^{-1}$ is well defined, and we have that the unique non-positive eigenvalue of $A$ is 
\begin{equation}\label{alphapfpsi}
\alpha = -p f^{-1}_\psi\Bigl( \frac{p}{1-p} \Bigr)\,,
\end{equation}
which is 0 for $p/(1-p)=d$ (i.e., $p=d/(d+1)$) and negative for $p/(1-p)<d$ (i.e., $p<d/(d+1)$). 
When $p/(1-p) > d$ (i.e., $p> d/(d+1)$), no solution to \eqref{alphacondition} and therefore no non-positive eigenvalue of $A$ exists.

Note further that for any fixed non-positive eigenvalue $\alpha$, any corresponding eigenvector $\phi$ must satisfy \eqref{phibetapM-1psi}. Hence, any eigenvector corresponding to $\alpha$ must lie on the same 1-dimensional subspace spanned by $M^{-1} \ket{\psi}$. As such, $\phi$ is unique up to scale, and as an eigenvalue of $A$, $\alpha$ has multiplicity $1$. 

Now, \eqref{eqn:optimal-psi-reliability} follows from \eqref{general-optimality} in Helstrom's theorem and \eqref{alphapfpsi}, and \eqref{optimal-psi-E} follows from \eqref{Eoptrho12} in Helstrom's theorem. This completes the proof.
\end{proof}

\noindent Remarks.

\begin{enumerate}
\item In case that $\psi_k=0$ for some $k$, the problem can be reduced to a subspace of smaller dimension, to which Thm.~\ref{thm:complete-information-optimality} can then be applied, except in the extreme case in which the dimension of the subspace is 1; in this case, which corresponds to $\psi$ being one of the $b_k$ (up to a phase), there is no difference between the collapsed and the uncollapsed state, and so, for a trivial reason, no experiment is more reliable than blind guessing.

\item As an example, consider the special $\psi$ with $\psi_k=1/\sqrt{d}$ for all $k=1,\ldots,d$, for $p<d/(d+1)$. It is readily verified that, in this case, $\phi=\psi$, as that is an eigenvector of $A$ with negative eigenvalue $p/d-(1-p)$. It follows that $E_\mathrm{opt}=I-\pr{\psi}$, which interestingly is \emph{independent of $p$}, and that $R^{\max}_p=1-p/d$.

\item We make the connection with perturbation theory of Hermitian matrices: It is a classical result that the spectrum of a Hermitian matrix is stable with respect to small perturbations (i.e., depends continuously on the matrix). For $p$ near 1, $A$ is dominated by $p \diag \pr{\psi}$, with a small perturbation consisting of $-(1-p)\pr{\psi}$. As such, for sufficiently large $p$, we expect the eigenvalues of $A$ to be effectively those of $\diag \pr{\psi}$---which are all positive. This agrees with the finding in the proof that all eigenvalues of $A$ are positive for large $p$, down to a transition point at $p = d/(d+1)$. Similarly, for small $p$, $A$ is dominated by $-(1-p)\pr{\psi}$, and since the perturbation $p\diag \pr{\psi}$ is a positive operator, perturbation theory tells us that $A$ has a single negative eigenvalue. 

\item Concerning the practical computation of $\phi$, and thus of $E_\mathrm{opt}$ for $p<d/(d+1)$, one may also, instead of finding the eigenspace of $A$ with negative eigenvalue, minimize $\braket{\phi|A|\phi}$. To simplify calculations, one may rotate the phases of the basis vectors $b_k$ so that $\psi_k\in[0,1]$ for $k = 1, \ldots, d$; note that such a change of $b_k$ has no effect on the distribution of the collapsed state vector $\psi'$. Then $A$ will have only real entries, and so will $\phi$ (up to a global phase that we can drop); so, we can take $\phi_k\in [0,1]$ for $k = 1, \ldots, d$ as well. This leads to the expression
\begin{equation}\label{eqn:opt-phi}
\braket{\phi|A|\phi}= p \left( \sum_{k = 1}^d \phi_k^2 \psi_k^2 \right) - (1-p) \left( \sum_{k = 1}^d \psi_k \phi_k \right)^{\!\!2}
\end{equation}
that needs to be minimized.
\end{enumerate}


\subsection{The Case of Dimension $d = 2$}\label{sec:complete-information-d=2}

The spin space of a spin-$\tfrac{1}{2}$ particle may serve as an example of a Hilbert space of dimension 2. We will identify $\Hilbert$ with $\CCC^2$ using the basis $\{b_1,b_2\}$. Spin space is equipped with a natural bijection between the 1D subspaces of $\CCC^2$ and the rays (or directions) in physical space $\mathbb{R}^3$, defined by the mapping $\CCC^2\to \RRR^3$, $\psi \mapsto \braket{\psi|\vsigma|\psi}$, with $\vsigma=(\sigma_1,\sigma_2,\sigma_3)$ the vector consisting of the 3 Pauli matrices
\be\label{Paulimatrices}
\sigma_1 = \begin{pmatrix}0&1\\1&0\end{pmatrix}\,\quad
\sigma_2= \begin{pmatrix}0&-i\\i&0\end{pmatrix}\,\quad
\sigma_3=\begin{pmatrix}1&0\\0&-1\end{pmatrix}\,,
\ee
for the appropriate choice of Cartesian coordinates in physical space. The basis vectors $b_1, b_2$ (or $(1,0), (0,1)$ in $\mathbb{C}^2$) then correspond to the positive and negative $z$-direction, respectively.

For $p\geq 2/3$, blind guessing is the optimal experiment. For $p< 2/3$ we can describe the optimal experiment as follows.

\begin{prop}\label{prop:dim2}
Let $d=2$, let $0<p< 2/3$, let $\psi\in\SSS$ with $\psi\notin \CCC b_1, \psi\notin \CCC b_2$, and let $\vv$ be the unit vector in the corresponding direction in $\RRR^3$, $\vv=(v_1,v_2,v_3)=\braket{\psi|\vsigma|\psi}$. Then $E_\mathrm{opt}=\pr{\chi}$ with
\be
\braket{\chi|\vsigma|\chi}\propto -\biggl( v_1,v_2,\Bigl(1-\frac{p}{1-p}\Bigr)v_3 \biggr)=:-\vw
\ee
with positive proportionality constant. That is, the optimal experiment is a Stern--Gerlach experiment in the direction $\vw$ obtained from $\vv$ by a dilation by the factor $1-\frac{p}{1-p}$ along the $z$ axis, with the outcome ``down'' labeled as ``yes'' and ``up'' labeled as ``no.''
\end{prop}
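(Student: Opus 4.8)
The plan is to specialize Theorem~\ref{thm:complete-information-optimality} to $d=2$ and then translate the resulting eigenvector condition into the Bloch-sphere language of Pauli matrices. By Theorem~\ref{thm:complete-information-optimality}, for $0<p<2/3=d/(d+1)$ we have $E_\mathrm{opt}=I-\pr{\phi}=\pr{\phi^\perp}$, where $\phi$ is the (unique up to phase) normalized eigenvector of the negative eigenvalue of $A=p\,\diag\pr{\psi}-(1-p)\pr{\psi}$, and $\phi^\perp$ is any unit vector orthogonal to $\phi$. So with $\chi:=\phi^\perp$, the task reduces to computing $\braket{\chi|\vsigma|\chi}$ in terms of $\vv=\braket{\psi|\vsigma|\psi}$.

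First I would set up coordinates: writing $\psi=(\psi_1,\psi_2)^{\!\top}$ in the basis $\{b_1,b_2\}$ (and, as in Remark~4, rotating phases so $\psi_1,\psi_2\in[0,1]$, which does not change the physics), we have $\diag\pr{\psi}=\diag(|\psi_1|^2,|\psi_2|^2)$, so $A$ is an explicit real symmetric $2\times 2$ matrix. I would compute its negative eigenvector $\phi$ directly — either by solving the $2\times 2$ eigenvalue problem, or more efficiently by using \eqref{phibetapM-1psi}, which gives $\ket{\phi}\propto M^{-1}\ket{\psi}$ with $M=-\frac{\alpha}{p}I+\diag\pr{\psi}$ diagonal. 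Then $\chi=\phi^\perp$ is obtained by the standard orthogonal-complement rule in $\CCC^2$, $(\chi_1,\chi_2)\propto(\overline{\phi_2},-\overline{\phi_1})$.

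Next I would compute the three components $v_i^{\chi}:=\braket{\chi|\sigma_i|\chi}$ from \eqref{Paulimatrices}: $v_3^\chi=|\chi_1|^2-|\chi_2|^2$, and $v_1^\chi+iv_2^\chi=2\overline{\chi_1}\chi_2$ (with appropriate conjugation conventions). Doing the same for $\psi$ gives $v_3=|\psi_1|^2-|\psi_2|^2$ and $v_1+iv_2=2\overline{\psi_1}\psi_2$. The relation between $\chi$ and $\psi$ going through $\phi$ and the orthogonal complement should, after simplification, produce exactly $\braket{\chi|\vsigma|\chi}\propto-\bigl(v_1,v_2,(1-\tfrac{p}{1-p})v_3\bigr)$ with a positive constant; verifying the sign and the $z$-rescaling factor is the crux. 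The cleanest route is probably to observe that passing from $\phi$ to $\phi^\perp$ negates the Bloch vector, and that $\phi\propto M^{-1}\psi$ scales the two components of $\psi$ unequally (by $1/M_{11}$ and $1/M_{22}$), which deforms the Bloch vector: the in-plane part $(v_1,v_2)$ and the $z$-part $v_3$ scale differently, and the ratio of the scalings is what yields the dilation factor. The main obstacle I anticipate is bookkeeping: correctly tracking the (positive) normalization constants through $M^{-1}$, the orthogonal complement, and the Bloch map, and confirming that the leftover factor multiplying $v_3$ is precisely $1-\frac{p}{1-p}$ rather than some other function of $p$; this requires using the defining relation \eqref{alphacondition}, i.e. $f_\psi(-\alpha/p)=p/(1-p)$, to eliminate $\alpha$ in favor of $p$. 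Once that identity is invoked, the expression should collapse to the claimed form, and the final sentence about the Stern--Gerlach interpretation (outcome ``down'' in direction $\vw$ labeled ``yes'') is just the physical reading of $E_\mathrm{opt}=\pr{\chi}$ with $\braket{\chi|\vsigma|\chi}\propto-\vw$.
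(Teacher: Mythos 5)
Your proposal is correct, and the computation it outlines does go through, but it takes a genuinely different route from the paper's proof. The paper works backwards: after normalizing phases so that $\psi_1,\psi_2>0$ and $\vv=(2\psi_1\psi_2,0,\psi_1^2-\psi_2^2)$, it \emph{constructs} candidate eigenvectors $\tilde\chi=(s_+,s_-)$ and $\tilde{\tilde\chi}=(-s_-,s_+)$ with $s_\pm=\sqrt{\|\vw\|\pm w_3}$ directly from the claimed Bloch vector $\vw$, verifies by an explicit $2\times 2$ computation (using the identity $s_+/s_--s_-/s_+=2w_3/w_1$) that they are eigenvectors of $A$, checks which eigenvalue is the negative one, and reads off $\braket{\tilde{\tilde\chi}|\vsigma|\tilde{\tilde\chi}}=-2\vw$; notably it never needs the eigenvalue $\alpha$ or the function $f_\psi$. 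You instead derive the answer forwards: $\phi\propto M^{-1}\psi$ from \eqref{phibetapM-1psi}, then $\chi=\phi^\perp$, then the Bloch map, with the characteristic equation $f_\psi(-\alpha/p)=p/(1-p)$ used at the end to eliminate $\alpha$. I checked the crux of your plan: with $z=-\alpha/p$ one gets $\chi\propto\bigl(\psi_2/(z+\psi_2^2),\,-\psi_1/(z+\psi_1^2)\bigr)$, whose unnormalized Bloch vector is $\frac{1}{(z+\psi_1^2)(z+\psi_2^2)}\bigl(-v_1,0,\,-\tfrac{z^2-\psi_1^2\psi_2^2}{(z+\psi_1^2)(z+\psi_2^2)}\,v_3\cdot(z+\psi_1^2)(z+\psi_2^2)\bigr)$, and the relation $f_\psi(z)=p/(1-p)$ gives exactly $1-\tfrac{p}{1-p}=\tfrac{z^2-\psi_1^2\psi_2^2}{(z+\psi_1^2)(z+\psi_2^2)}$, so the $z$-component collapses to $-(1-\tfrac{p}{1-p})v_3$ with an overall positive constant, as required. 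Your route is more systematic and explains \emph{where} the dilation factor comes from (the unequal scaling of the two components by $M^{-1}$ combined with the characteristic equation), at the cost of carrying $\alpha$ through the computation; the paper's route is shorter on the page but offers no motivation for the ansatz. To turn your proposal into a complete proof you would only need to write out the algebra sketched above.
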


\begin{proof}
Change $b_1,b_2$ by phase factors so that $\psi_1,\psi_2$ are real and positive (using $\psi_1\neq 0 \neq \psi_2$), and rotate the Cartesian coordinate system in physical space so that \eqref{Paulimatrices} still holds (which is a rotation about the $z$ axis); then $\vv=(v_1,v_2,v_3)=(2\psi_1\psi_2,0,\psi_1^2-\psi_2^2)$. Let $q=1-p$ and $r=2p-1$, note that $-r/q=1-\frac{p}{1-p}$ and
\be
A=\begin{pmatrix} r\psi_1^2 & -q\psi_1\psi_2\\ -q\psi_1\psi_2 & r\psi_2^2\end{pmatrix},
\ee
and set $\vw=(w_1,w_2,w_3)=\bigl(v_1,0,-(r/q)v_3\bigr)$, $s_{\pm}:= \sqrt{\|\vw\|\pm w_3}$, $\tilde\chi = ( s_+,s_-)\in\CCC^2$, and $\tilde{\tilde{\chi}}=(-s_-,s_+)\in\CCC^2$; note that $\|\vw\|-w_3>0$ because $v_1>0$ due to $\psi_1,\psi_2>0$. A computation shows that
\be
\Delta:=\frac{(A\tilde\chi)_1}{\tilde\chi_1}-\frac{(A\tilde\chi)_2}{\tilde\chi_2} =r(\psi_1^2-\psi_2^2)+q\psi_1\psi_2\Bigl(\frac{s_+}{s_-}-\frac{s_-}{s_+}\Bigr).
\ee
One verifies that $s_+/s_--s_-/s_+= 2w_3/w_1$, so $\Delta=0$, which shows that $\tilde{\chi}$ is an eigenvector of $A$ with eigenvalue $\tilde\alpha=(A\tilde\chi_1)/\tilde\chi_1$. Since $\tilde{\tilde{\chi}}$ is orthogonal to $\tilde\chi$, or by a similar computation, $\tilde{\tilde{\chi}}$ is also an eigenvector with eigenvalue $\tilde{\tilde{\alpha}}$. One verifies that $\tilde{\alpha}<\tilde{\tilde{\alpha}}$, and since we know that $A$ has exactly one negative eigenvalue (for $p<2/3$), we must have $\tilde{\alpha}<0\leq \tilde{\tilde{\alpha}}$. Thus, $\phi$ must be proportional to $\tilde{\chi}$, so $E_\mathrm{opt}=I-\pr{\phi} = \pr{\chi}$ with $\chi=\tilde{\tilde{\chi}}/\|\tilde{\tilde{\chi}}\|$. One verifies that $\braket{\tilde{\chi}|\vsigma|\tilde{\chi}} = 2\vw$ and $\braket{\tilde{\tilde{\chi}}|\vsigma|\tilde{\tilde{\chi}}}=-2\vw$. Now rotate back the Cartesian coordinates and the basis $\{b_1,b_2\}$.
\end{proof}

We note for the sake of completeness that for $d=2$ the maximal reliability is
\begin{equation}\label{Rmax2D}
R^{\max}_p(\psi) = \begin{cases}
p & \text{if } p \geq 2/3 \\
\frac{1}{2} + \frac{1}{2} \sqrt{ (1-2p)^2 + 4p(2-3p) |\psi_1|^2 |\psi_2|^2  } & \text{if } p < 2/3 \end{cases}.
\end{equation}
Graphs of $p\mapsto R^{\max}_p(\psi)$ are shown in Fig.~\ref{fig56} for two different choices of $\psi$; the graph of $|\psi_1|^2\mapsto R^{\max}_p(\psi)$ (with $|\psi_2|^2=1-|\psi_1|^2$) is shown in Fig.~\ref{fig7} for $p=1/2$.

\begin{figure}[h]
\begin{center}
\includegraphics[width=5cm]{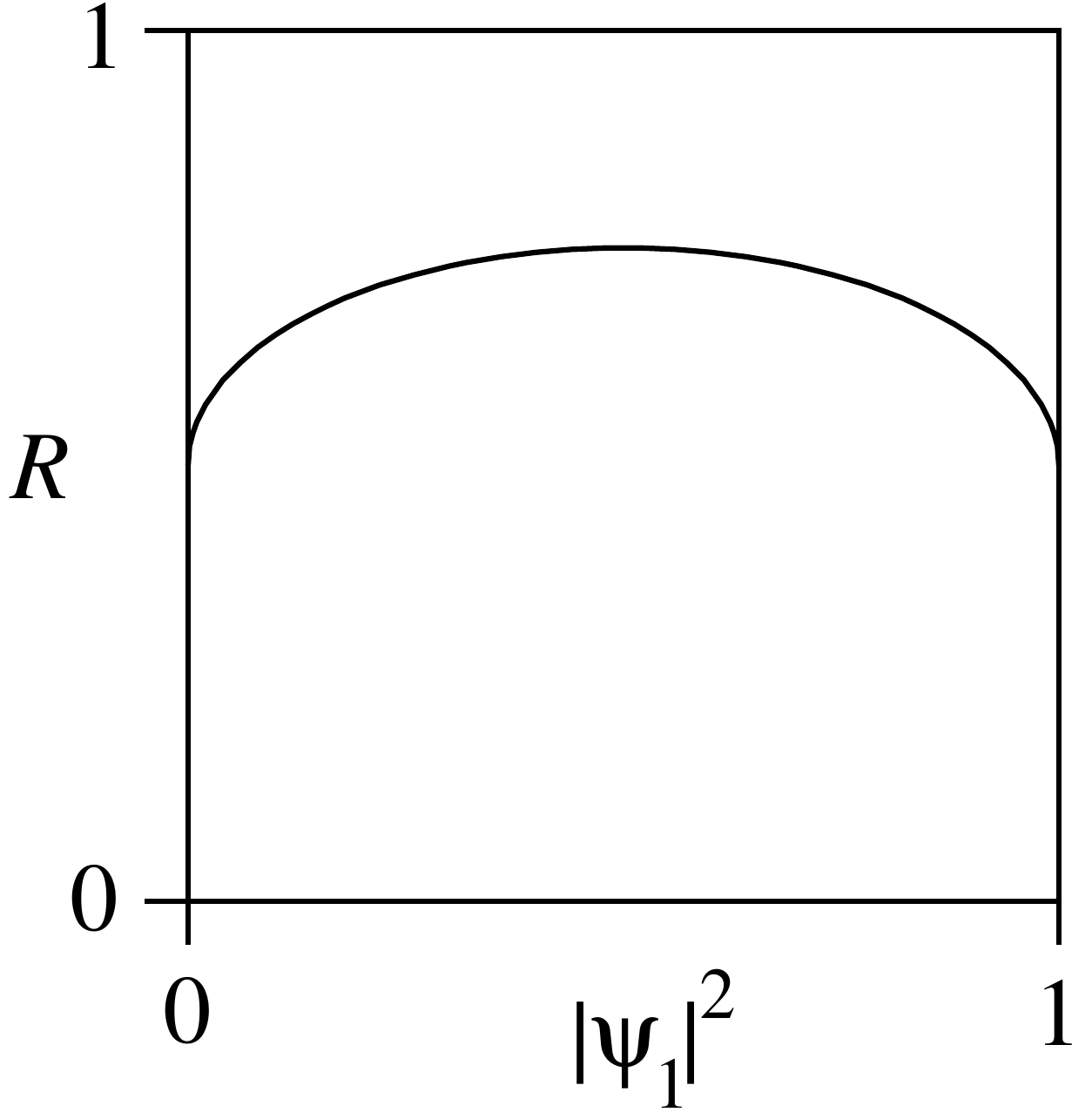}
\end{center}
\caption{
Graph of the maximal reliability $R^{\max}_{p}(\psi)$ as in \eqref{Rmax2D} for $d=2$ (i.e., $\psi\in\CCC^2$) and $p=1/2$, as a function of $|\psi_1|^2$. The shape is the upper half of an ellipse centered at $(\frac{1}{2},\frac{1}{2})$, the maximal value is 3/4.
}
\label{fig7}
\end{figure}


\subsection{Bounds on the Maximal Reliability}
\label{sec:bounds}

While Thm.~\ref{thm:complete-information-optimality} specifies the value of $R^{\max}_p(\psi)$, it is sometimes useful to have bounds on $R^{\max}_p(\psi)$ that are easier to compute. Some of the following bounds are depicted in Fig.~\ref{fig5c6c}.

\begin{corollary}\label{cor:r-max-bounds}
For all $\psi\in\SSS$ and $0<p<1$,
\begin{equation}\label{eqn:r-max-bounds}
\max\Bigl(p, 1 - p \sum_{k = 1}^d |\psi_k|^4\Bigr) \leq R^{\max}_p(\psi) \leq \max(p, 1-p/d).
\end{equation}
\end{corollary}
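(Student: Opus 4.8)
The plan is to prove the two inequalities separately. The lower bound comes from writing down two explicit yes-no experiments; the upper bound I would get from Helstrom's theorem (Theorem~\ref{thm:rho1rho2}) combined with one Cauchy--Schwarz estimate and a one-variable optimization, though it can also be read off directly from Theorem~\ref{thm:complete-information-optimality}.

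For the lower bound, note that $R^{\max}_p(\psi)$, being the supremum of $R_{\psi,p}(E)$ over all $0\le E\le I$, is at least as large as the reliability of any particular experiment. The choice $E=I$ gives $R_{\psi,p}(I)=p$ by \eqref{eqn:reliability-psi-formula}. For $E=I-\pr{\psi}$ one has $\diag E=I-\sum_k|\psi_k|^2\,\pr{b_k}$, so $\braket{\psi|\diag E|\psi}=1-\sum_k|\psi_k|^4$, while $I-E=\pr{\psi}$ gives $\braket{\psi|I-E|\psi}=\bigl|\scp{\psi}{\psi}\bigr|^2=1$; hence \eqref{eqn:reliability-psi-formula} yields $R_{\psi,p}(I-\pr{\psi})=p\bigl(1-\sum_k|\psi_k|^4\bigr)+(1-p)=1-p\sum_k|\psi_k|^4$. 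Therefore $R^{\max}_p(\psi)\ge\max\bigl(p,\,1-p\sum_k|\psi_k|^4\bigr)$, the left inequality.

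For the upper bound, I would use that by Helstrom's theorem $R^{\max}_p(\psi)=(1-p)+\lambda^+=(1-p)+\max_{0\le E\le I}\tr(AE)$ with $A=p\,\diag\pr{\psi}-(1-p)\pr{\psi}$, so it suffices to bound $\tr(AE)$ for every $0\le E\le I$. Writing $e_k=\braket{b_k|E|b_k}\in[0,1]$ and $w=\sum_k|\psi_k|^2e_k\in[0,1]$, one has $\tr(AE)=p\,w-(1-p)\braket{\psi|E|\psi}$. The key step is the Cauchy--Schwarz inequality $\braket{\psi|F|\psi}\le d\sum_k|\psi_k|^2\braket{b_k|F|b_k}$, valid for every $F\ge0$ (write $F=G^2$; then $\|G\psi\|\le\sum_k|\psi_k|\,\|Gb_k\|\le\sqrt d\,\bigl(\sum_k|\psi_k|^2\|Gb_k\|^2\bigr)^{1/2}$ by the triangle inequality followed by Cauchy--Schwarz). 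Applying it to $F=I-E$ gives $\braket{\psi|E|\psi}=1-\braket{\psi|I-E|\psi}\ge 1-d(1-w)$, and together with $\braket{\psi|E|\psi}\ge0$ this yields $\tr(AE)\le p\,w-(1-p)\max(0,\,1-d+d\,w)=:h(w)$. A short piecewise-linear maximization of $h$ over $w\in[0,1]$ gives $\max_w h(w)=p-p/d$ when $p<d/(d+1)$ and $\max_w h(w)=2p-1$ when $p\ge d/(d+1)$ (the only check needed is $2p-1\ge p-p/d\iff p\ge d/(d+1)$). Hence $R^{\max}_p(\psi)=(1-p)+\max_E\tr(AE)\le\max(1-p/d,\,p)$. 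Alternatively, straight from \eqref{eqn:optimal-psi-reliability}: for $p\ge d/(d+1)$, $R^{\max}_p(\psi)=p=\max(p,1-p/d)$; for $p<d/(d+1)$ (reducing first to the case $\psi_k\ne0$ for all $k$ via the dimension reduction noted after Theorem~\ref{thm:complete-information-optimality}, and using $1-p/d'\le 1-p/d$), put $z=f^{-1}_\psi\bigl(\tfrac{p}{1-p}\bigr)\ge0$, so $R^{\max}_p(\psi)=p(1+z)$ with $p=\tfrac{f_\psi(z)}{1+f_\psi(z)}$, and clearing denominators turns $p(1+z)\le1-p/d$ into $\sum_k\frac{|\psi_k|^2}{z+|\psi_k|^2}\le\frac{d}{dz+1}$, which is Jensen's inequality for the concave map $a\mapsto\frac{a}{z+a}$.

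The one place where something must be seen is the Cauchy--Schwarz bound together with the observation that it must be applied to $F=I-E$ rather than to $E$: applying it to $E$ gives only the useless $\tr(AE)\le p-(1-p)\braket{\psi|E|\psi}\le p$, i.e.\ $R^{\max}_p(\psi)\le1$, whereas applying it to $I-E$ converts the subtracted term $-(1-p)\braket{\psi|E|\psi}$ into a genuine gain that shaves $p$ down to $1-p/d$ in the relevant range. Everything after that is routine: the piecewise-linear optimization of $h$, and, in the alternative argument, the bijectivity of $f_\psi$ and the algebraic reduction to Jensen.
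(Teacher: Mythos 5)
Your proof is correct. The lower bound is exactly the paper's argument: the paper likewise evaluates the reliability at $E=I$ and at $E=I-\pr{\psi}$ (and also offers an equivalent derivation via the harmonic--arithmetic mean inequality applied to $f_\psi$). For the upper bound, your primary route is genuinely different from the paper's. The paper stays inside Theorem~\ref{thm:complete-information-optimality}: it proves by a smoothing argument that $f_\psi(z)\leq 1/(z+1/d)$ for all $\psi\in\SSS$ (equation \eqref{maxf}), inverts this to $f_\psi^{-1}(u)\leq 1/u-1/d$, and substitutes into \eqref{eqn:optimal-psi-reliability}. You instead bound $\tr(AE)$ directly for every $0\leq E\leq I$ via the estimate $\braket{\psi|F|\psi}\leq d\sum_k|\psi_k|^2\braket{b_k|F|b_k}$ applied to $F=I-E\geq 0$, obtaining $\tr(AE)\leq p\,w-(1-p)\max(0,1-d+dw)$ with $w=\sum_k|\psi_k|^2\braket{b_k|E|b_k}$, and finish with a piecewise-linear maximization; I checked the $F=G^2$/triangle-inequality/Cauchy--Schwarz chain and the case split at $p=d/(d+1)$, and both are sound. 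What your route buys: it requires only the reliability formula \eqref{RHelstrom}, not the eigenvalue analysis of Theorem~\ref{thm:complete-information-optimality}, and it applies verbatim when some $\psi_k=0$, where the paper's route (and your alternative one) must first reduce to a lower-dimensional subspace and use $1-p/d'\leq 1-p/d$. Your alternative route is essentially the paper's, except that you derive $f_\psi(z)\leq d/(dz+1)$ from Jensen's inequality for the concave map $a\mapsto a/(z+a)$ rather than from the paper's explicit two-component smoothing computation --- a slightly cleaner way to reach the same inequality.
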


\begin{figure}[h]
\begin{center}
\includegraphics[width=5cm]{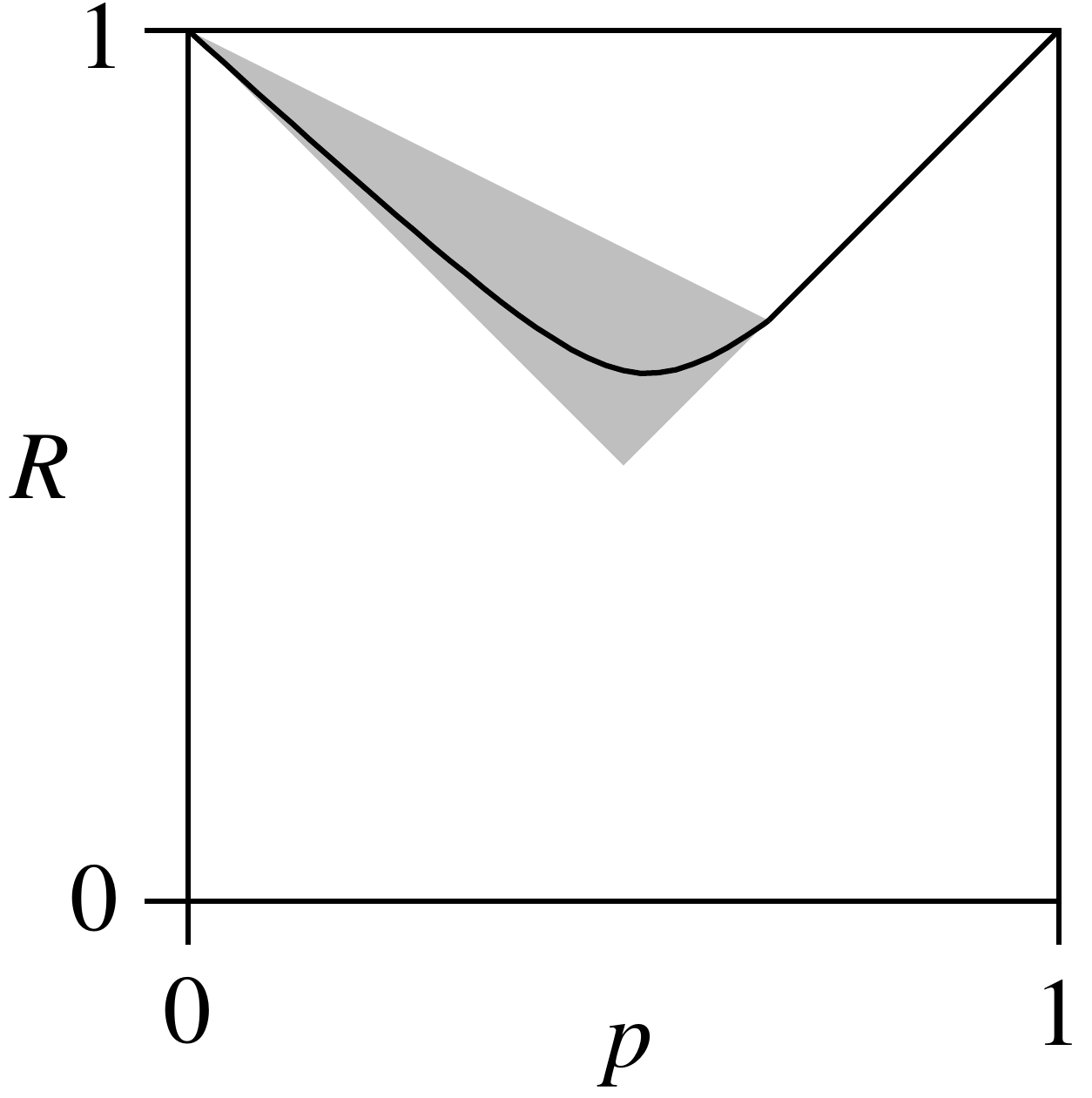}\quad\quad
\includegraphics[width=5cm]{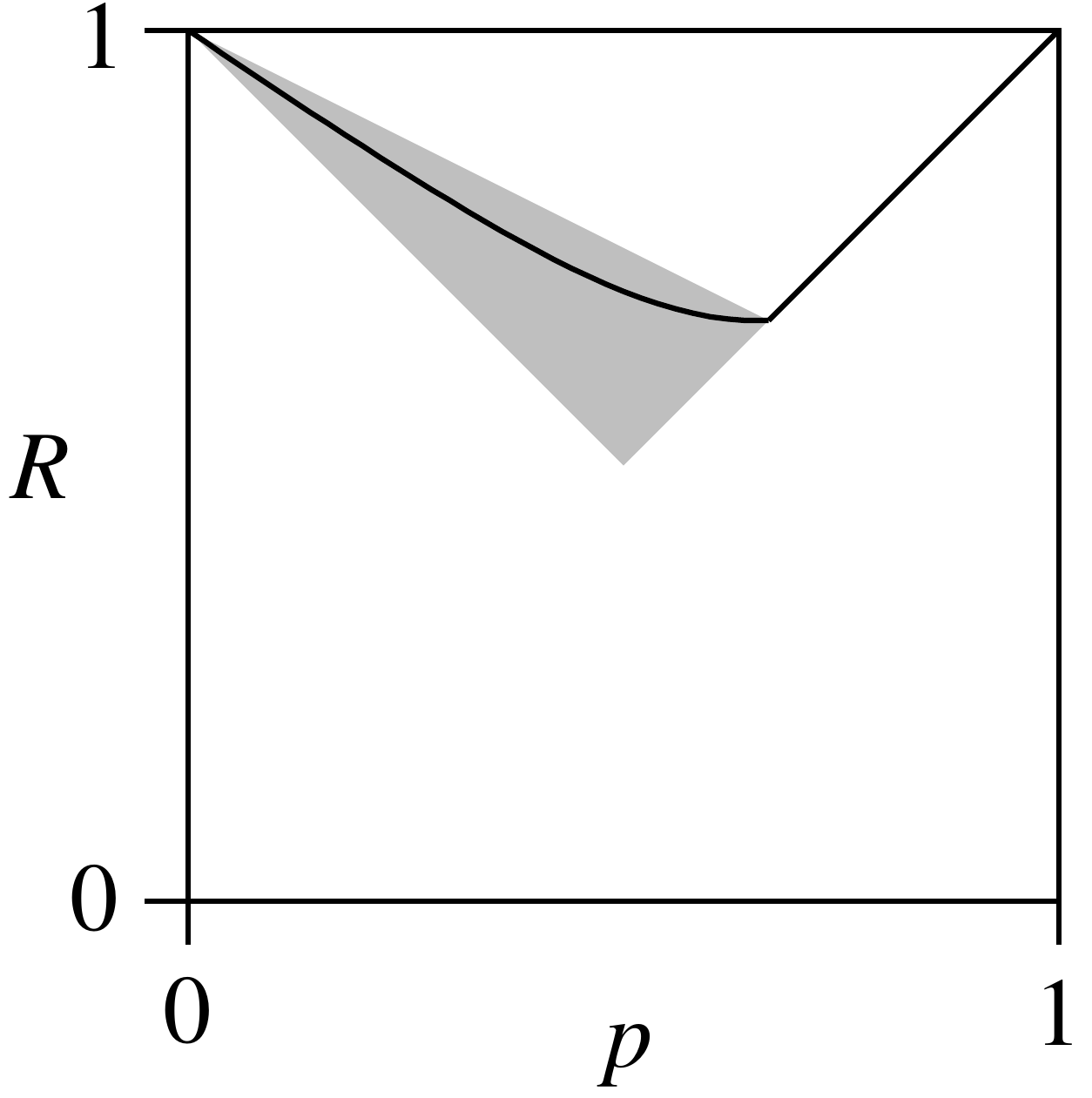}
\end{center}
\caption{
The same two graphs as in Fig.~\ref{fig56}, along with the shaded region characterized by the $\psi$-independent bounds provided by Cor.~\ref{cor:r-max-bounds}, $\max(p,1-p)\leq R^{\max}_p(\psi)\leq \max(p,1-p/d)$. For $\psi$-dependent bounds, see Fig.~\ref{fig5d6d}.}
\label{fig5c6c}
\end{figure}

\begin{proof}
We begin by noting the property of the function $f_\psi(z)$ defined by \eqref{eqn:f-def} that for any fixed $z>0$, $f_\psi(z)$ increases if we change two of the $|\psi_k|^2$ so that the smaller grows and the bigger shrinks (while the sum remains constant). Indeed, suppose $|\psi_1|^2<|\psi_2|^2$ and change $|\psi_1|^2\to |\psi_1|^2+dx$, $|\psi_2|^2\to |\psi_2|^2-dx$ with infinitesimal $dx>0$; then, to first order in $dx$ and leaving aside the unchanged terms with $k>2$,
\begin{equation}
\begin{split}
&\frac{|\psi_1|^2+dx}{z+|\psi_1|^2+dx} + \frac{|\psi_2|^2-dx}{z+|\psi_2|^2-dx}\\
&=\frac{|\psi_1|^2}{z+|\psi_1|^2} + \frac{z}{(z+|\psi_1|^2)^2}dx
+\frac{|\psi_1|^2}{z+|\psi_2|^2} -  \frac{z}{(z+|\psi_2|^2)^2}dx\\
&>\frac{|\psi_1|^2}{z+|\psi_1|^2}+\frac{|\psi_1|^2}{z+|\psi_2|^2} \,.
\end{split}
\end{equation}
As a consequence,
\be\label{maxf}
\max_{\psi\in\SSS} f_\psi(z) = \frac{1}{z+1/d} \text{ and }
\min_{\psi\in\SSS} f_\psi(z) = \frac{1}{z+1}
\ee
with the maximum attained at $\psi_k =1/\sqrt{d}$ and the minimum at $\psi_k=\delta_{k1}$.

To verify the upper bound in \eqref{eqn:r-max-bounds}, we note that if $f\leq g$ for decreasing functions then $f^{-1}\leq g^{-1}$, so with $g_\psi(z)=1/(z+1/d)$ the first equation of \eqref{maxf} yields
\be
f_\psi^{-1}(u) \leq \frac{1}{u}-\frac{1}{d}\,,
\ee
which with \eqref{eqn:optimal-psi-reliability} gives the upper bound in \eqref{eqn:r-max-bounds}.

The lower bound can be derived from the fact that the harmonic mean is always less than or equal to the arithmetic mean, which implies that
\be
f_\psi(z) \geq \frac{1}{z+ \sum |\psi_k|^4}\,.
\ee

A more illustrative proof for the lower bound goes as follows. 
Choosing $E = I - \pr{\psi}$ yields
\begin{equation}
\begin{split}
R_{\psi, p}(E) & = (1-p) + \tr \left[E A\right] \\
& = (1-p) + p \tr \left[(I - \pr{\psi}) \diag \pr{\psi} \right] \\
& = (1-p) + p\Bigl(1 - \sum_{k = 1}^d |\psi_k|^4 \Bigr)\\
& = 1 - p \sum_{k = 1}^d |\psi_k|^4.
\end{split}
\end{equation}
This choice, or else blind guessing whenever that is more reliable, gives the desired lower bound. 
\end{proof}

\noindent Remarks.

\begin{enumerate}
\item Since under the assumption $\psi_k\neq 0$, $\sum |\psi_k|^4 < 1$, the lower bound in \eqref{eqn:r-max-bounds} is an improvement on the lower bound \eqref{RmaxbdE0} provided by blind guessing alone.

\item The upper and lower bounds of Cor.~\ref{cor:r-max-bounds} are tight in the sense that equality holds for some $\psi$. Indeed, setting $\psi_k=1/\sqrt{d}$ for all $k=1,\ldots,d$, the lower bound coincides with the upper bound, and $R^{\max}_p(\psi) = 1 - p/d$.

\item It follows further that, for any fixed $0<p<1$, $R^{\max}_p(\psi)$ can be made arbitrarily close to 1 for suitable choice of $d$ and $\psi$. However, this does not mean that for large $d$ it be typical for $\psi$ to have $R^{\max}_p(\psi)$ close to 1. The situation is analyzed further in the following two remarks.

\item The following bound is similar to \eqref{maxf} but slightly tighter:
For any $\psi\in\SSS$, let $\delta = \max_k |\psi_k|^2$. Then,
\begin{equation}\label{another-f-bound}
f_\psi(z) \leq \frac{\delta}{z + \delta} + \frac{1-\delta}{z + (1-\delta)/(d-1) }.
\end{equation}
Indeed, fixing the component with $|\psi_k|^2 = \delta$, $f_\psi(z)$ is maximized by equally distributing the remaining weight among the other components, $|\psi_j|^2 = (1-\delta)/(d-1)$ for $j \neq k$. This yields \eqref{another-f-bound}. This bound can be used to give an upper bound on $R^{\max}_p(\psi)$ that is tighter than the upper bound of Cor.~\ref{cor:r-max-bounds} (as the latter does not depend on $\psi$ but the former does through $\delta$). Furthermore, taking $d$ to infinity gives the following dimension-independent bound:
\begin{equation}\label{infinity-bound}
R^{\max}_p(\psi) \leq \frac{1}{2}\left( 1 + p(1-\delta) + \sqrt{ {(1-p)}^2 + 2 p (1-p)\delta - (4-5p)p \delta^2 }\right).
\end{equation}
This bound, depicted in Fig.~\ref{fig5d6d}, is of interest insofar as it is strictly less than 1 while valid for all $\psi$ with $\delta > 0$, even as $d\to\infty$. For instance, if $p = 1/2$ (corresponding in some sense to maximal initial uncertainty as to whether or not collapse has occurred), $R^{\max}_{1/2}(\psi) \leq 0.91$ whenever $\delta \geq 1/2$, independently of $d$.

\begin{figure}[h]
\begin{center}
\includegraphics[width=5cm]{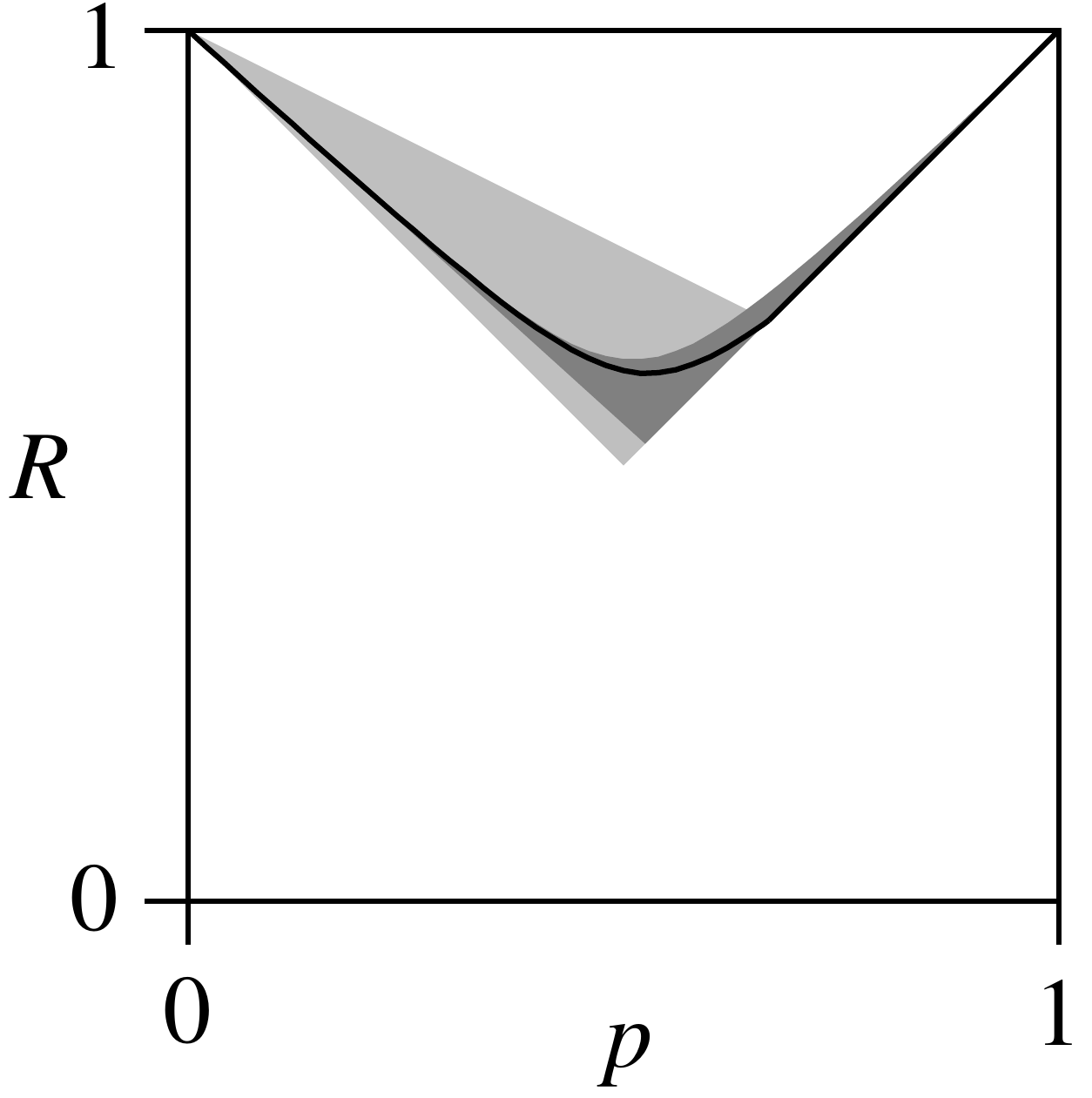}\quad\quad
\includegraphics[width=5cm]{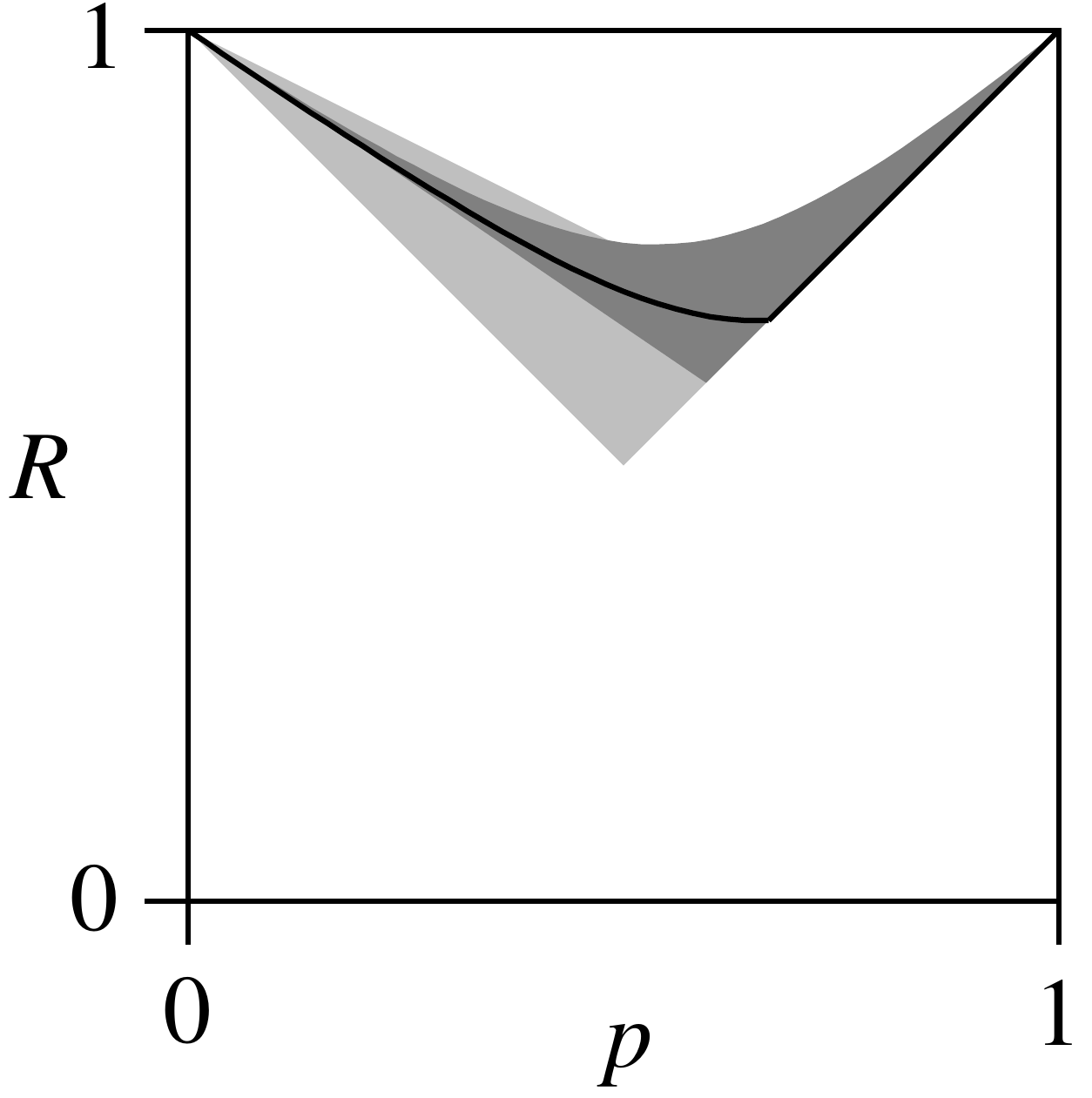}
\end{center}
\caption{
The same two diagrams as in Fig.~\ref{fig5c6c}, and in addition the darkly shaded region characterized by our $\psi$-dependent bounds: the upper bound provided by \eqref{infinity-bound} and the lower bound provided by Cor.~\ref{cor:r-max-bounds}, $\max(p,1-p\sum |\psi_k|^4)\leq R^{\max}_p(\psi)$.
}
\label{fig5d6d}
\end{figure}

\item Another remark concerns the $\delta$-dependence of the bound \eqref{infinity-bound}.
If, for a sequence of $\psi$s with $d\to\infty$, $\delta = \max |\psi_k|^2$ tends to $0$, as in the case with $|\psi_k|^2 = 1/d$ for each $k$, then $R^{\max}_p(\psi)$ tends to $1$ in the limit. Note that, in this situation, also the lower bound of Cor.~\ref{cor:r-max-bounds} approaches $1$.

Alternately, if $\delta$ tends to $1$, so that the weight gets concentrated on a single component, even as $d$ increases, then $R^{\max}_p(\psi)$ tends to $\max(p, 1-p)$, which coincides with the reliability of blind guessing; of course, this behavior is consistent with the lower bound given in Cor.~\ref{cor:r-max-bounds}. Concentrating the weight on a single component, while shrinking the other components to zero, effectively reduces the dimension relevant to the problem.

Alice is therefore in the best position if $\psi$ is such that the weights are distributed relatively uniformly across many components. As the largest weight increases and approaches $1$, the probability of correctly determining whether collapse has occurred diminishes; this is only expected because if $\psi$ is close to one of the $b_k$, its primary mode of collapse ($b_k$ times a phase) will be largely indistinguishable from its initial state.
\end{enumerate}


\section{Incomplete Information}
\label{sec:incomplete}

We now assume that Alice does not know $\psi$ precisely, but only knows that the initial state of $S$ was drawn from a known distribution $\mu$. The situation of the previous section, with known $\psi$, is included in that $\mu$ may be a delta distribution on $\SSS$. We write $\psi\sim \mu$ to express that the random variable $\psi$ has distribution $\mu$, and $\EEE$ for expectation. As such, the results of this section parallel those of the previous. 

\subsection{Reliability and Optimal Experiment}
\label{sec:incomplete-information-reliability}

The reliability of a yes-no experiment, still defined to be the probability of correctly answering whether a collapse has occurred, is now a function of $\mu$ (instead of $\psi$), found to be
\begin{equation}
\label{eqn:limited-information-reliability}
\begin{split}
R_{\mu, p}(\E) & = \PPP(Z = \text{yes}, \text{collapse}) + \PPP(Z = \text{no}, \text{no collapse}) \\
& = \EEE_{\psi \sim \mu} \left[  \PPP(\{ Z = \text{yes}, \text{collapse} \} \text{ or } \{Z = \text{no}, \text{no collapse}\} | \psi)  \right] \\
& = \EEE_{\psi \sim \mu} \left[ R_{\psi, p}(\E) \right] \\
& = \int_{\SSS}\mu(d\psi) \,  \tr \left[ \pr{\psi} (p \diag E + (1-p) (I - E ))  \right] \\
& = \tr \left[ \rho\left( p\diag E + (1-p)(I-E)\right) \right]\,,
\end{split}
\end{equation}
where $\rho$ is the density matrix associated with distribution $\mu$, defined by
\begin{equation}\label{eqn:rho-definition-eqn}
\rho = \int_{\SSS} \mu(d\psi) \, \pr{\psi}\,.
\end{equation}
Note that in this case, the reliability depends on the distribution $\mu$ only through $\rho$: For two distributions $\mu_1\neq \mu_2$ with the same $\rho$, any experiment will produce equally reliable results on either distribution. We can thus write $R_{\rho,p}(E)$ instead of $R_{\mu,p}(\E)$. This observation also shows that it is not necessary for Alice to know $\mu$, it suffices to know $\rho$; and since $\rho$, but not $\mu$, can be measured if a large ensemble of systems is provided whose wave functions have distribution $\mu$, the assumption that Alice knows $\rho$ is natural. 

The statement analogous to Thm.~\ref{thm:complete-information-imperfect-reliability} is also true:

\begin{theorem}\label{thm:incomplete-information-imperfect-reliability}
For $0<p<1$ and any density matrix $\rho$, $R_{\rho,p}(E)<1$ for all $0\leq E\leq I$. That is, 
there is no experiment $\E$ that can correctly determine with probability $1$ whether or not $S$ has collapsed.
\end{theorem}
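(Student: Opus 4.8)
The plan is to reduce Theorem~\ref{thm:incomplete-information-imperfect-reliability} to Theorem~\ref{thm:complete-information-imperfect-reliability} by localizing on the support of $\mu$. The key observation is that $R_{\rho,p}(E) = \EEE_{\psi\sim\mu}[R_{\psi,p}(E)]$ by \eqref{eqn:limited-information-reliability}, so $R_{\rho,p}(E) = 1$ would force $R_{\psi,p}(E) = 1$ for $\mu$-almost every $\psi$. But Theorem~\ref{thm:complete-information-imperfect-reliability} says $R_{\psi,p}(E) < 1$ for \emph{every} $\psi\in\SSS$, so the set where $R_{\psi,p}(E) = 1$ is empty, hence certainly has $\mu$-measure zero; therefore $R_{\rho,p}(E) < 1$. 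This is the cleanest route and I would lead with it.

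The one subtlety is that $R_{\rho,p}(E) = 1$ does not instantly give $R_{\psi,p}(E) = 1$ almost surely unless one knows $R_{\psi,p}(E) \le 1$ for all $\psi$; but that pointwise upper bound is immediate since $R_{\psi,p}(E)$ is a probability (a convex combination of two probabilities, by \eqref{eqn:psi-reliability-def}), or directly from \eqref{eqn:reliability-psi-formula} using $0\le\diag E\le I$ and $0\le I-E\le I$. So the argument is: $1 = R_{\rho,p}(E) = \EEE_{\psi\sim\mu}[R_{\psi,p}(E)] \le \EEE_{\psi\sim\mu}[1] = 1$ with integrand bounded above by $1$, which forces $R_{\psi,p}(E) = 1$ for $\mu$-a.e.\ $\psi$, contradicting Theorem~\ref{thm:complete-information-imperfect-reliability} (as long as $\mu$ is not supported entirely on the empty set, i.e.\ always). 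I do not anticipate a real obstacle here; the only thing to be careful about is citing the right bound to justify passing from the expectation being $1$ to the integrand being $1$.

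An alternative, self-contained route that avoids any measure-theoretic language is to work directly with $A = p\rho_1 - (1-p)\rho_2$ where $\rho_2 = \rho$ and $\rho_1 = \diag\rho$, and invoke Helstrom's theorem: $R^{\max}_p(\rho_1,\rho_2) = (1-p) + \lambda^+$, so $R^{\max} = 1$ would require $\lambda^+ = p$, i.e.\ the positive eigenvalues of $A$ sum to $p$. Since $\tr A = p\tr\rho_1 - (1-p)\tr\rho_2 = p - (1-p) = 2p - 1$, this would force $\lambda^- = \tr A - \lambda^+ = p - 1$, meaning the negative eigenvalues sum to $-(1-p)$; combined with $\lambda^+ = p$ and $\tr|A| = \lambda^+ - \lambda^- = 1$. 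One then argues this is impossible because $A = p\,\diag\rho - (1-p)\rho$ and $\rho \ge 0$ has $\diag\rho$ with the same diagonal, so $\braket{b_k|A|b_k} = (2p-1)\braket{b_k|\rho|b_k}$, and a short argument on the diagonal entries (or on $\|A\|$ versus $\|\rho\|$) shows $A$ cannot have its spectrum split as $\{p\} \cup \{-(1-p)\}$ pieces as required; the off-diagonal part of $\rho$, which survives in $A$ with coefficient $-(1-p)\ne 0$, prevents $A$ from being "as extreme as possible" in both directions simultaneously. I would only pursue this if a referee wanted the incomplete-information case argued without reference to the complete-information theorem; otherwise the reduction in the first paragraph is shorter and more transparent, and is what I would write as the proof.
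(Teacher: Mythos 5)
Your primary argument is exactly the paper's proof: pick a distribution $\mu$ realizing $\rho$, use $R_{\rho,p}(E)=\EEE_{\psi\sim\mu}[R_{\psi,p}(E)]$ from \eqref{eqn:limited-information-reliability}, and invoke Thm.~\ref{thm:complete-information-imperfect-reliability} pointwise; your extra care about passing from ``expectation equals 1'' to ``integrand equals 1 a.e.'' (via the bound $R_{\psi,p}(E)\leq 1$) is correct and just makes explicit what the paper calls immediate. The proposal is correct and takes essentially the same route.
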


\begin{proof}
Choose any distribution $\mu$ with $\rho_\mu=\rho$; this is possible for every density matrix. Then the statement is immediate from the third line of \eqref{eqn:limited-information-reliability} and Thm.~\ref{thm:complete-information-imperfect-reliability}. (That is, having less information about the initial state $\psi$ cannot be conducive to having greater reliability.)
\end{proof}

In parallel with \eqref{eqn:r-max-psi}, we define
\begin{equation}\label{eqn:r-max-rho}
R^{\max}_p(\rho) = \sup_{0 \leq E \leq I} R_{\rho, p}(E)\,.
\end{equation}
As an immediate consequence of Cor.~\ref{cor:r-max-bounds}, we obtain the following:

\begin{corollary}\label{cor:r-max-bounds-rho}
For all density matrices $\rho$, and $0 < p < 1$,
\begin{equation}\label{eqn:r-max-bounds-rho}
R^{\max}_p(\rho) \leq \max(p, 1-p/d).
\end{equation}
\end{corollary}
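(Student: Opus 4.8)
The plan is to reduce the statement about a general density matrix $\rho$ to the known-$\psi$ bound of Corollary \ref{cor:r-max-bounds} by exploiting the linearity of the reliability in the state, exactly as Theorem \ref{thm:incomplete-information-imperfect-reliability} reduced to Theorem \ref{thm:complete-information-imperfect-reliability}. Concretely, fix any probability distribution $\mu$ on $\SSS$ whose associated density matrix is $\rho$ (such a $\mu$ always exists, e.g. via a spectral decomposition $\rho=\sum_j \lambda_j \pr{e_j}$ and the discrete measure $\mu=\sum_j \lambda_j \delta_{e_j}$). Then for any fixed $0\le E\le I$, the third line of \eqref{eqn:limited-information-reliability} gives $R_{\rho,p}(E)=R_{\mu,p}(\E)=\EEE_{\psi\sim\mu}[R_{\psi,p}(E)]$.

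Now I would apply the upper bound of Corollary \ref{cor:r-max-bounds} inside the expectation. For every $\psi\in\SSS$ we have $R_{\psi,p}(E)\le R^{\max}_p(\psi)\le \max(p,1-p/d)$, and crucially the right-hand side is a constant independent of $\psi$. Hence $R_{\rho,p}(E)=\EEE_{\psi\sim\mu}[R_{\psi,p}(E)]\le \max(p,1-p/d)$. Since this holds for every admissible $E$, taking the supremum over $0\le E\le I$ yields $R^{\max}_p(\rho)\le\max(p,1-p/d)$, as claimed.

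There is essentially no obstacle here: the argument is a one-line averaging over the known-$\psi$ bound, and the only point that needs a word is the existence of a representing measure $\mu$ for an arbitrary density matrix, which follows from diagonalizing $\rho$. One minor subtlety worth noting is that Corollary \ref{cor:r-max-bounds} as stated assumes $\psi_k\ne 0$ for all $k$; but the bound $R^{\max}_p(\psi)\le\max(p,1-p/d)$ also holds when some $\psi_k=0$, since then the problem lives in a subspace of dimension $d'<d$ and $\max(p,1-p/d')\le\max(p,1-p/d)$ (and in the degenerate case $d'=1$, $R^{\max}_p(\psi)=\max(p,1-p)\le\max(p,1-p/d)$). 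So the inequality used inside the expectation is valid for all $\psi\in\SSS$, and the proof goes through without restriction on $\mu$.
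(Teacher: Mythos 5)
Your proof is correct and follows exactly the route the paper intends: the paper presents Cor.~\ref{cor:r-max-bounds-rho} as an ``immediate consequence'' of Cor.~\ref{cor:r-max-bounds}, the implicit link being precisely the averaging identity $R_{\rho,p}(E)=\EEE_{\psi\sim\mu}[R_{\psi,p}(E)]$ from \eqref{eqn:limited-information-reliability} applied to any $\mu$ representing $\rho$, just as in the proof of Thm.~\ref{thm:incomplete-information-imperfect-reliability}. Your closing remark handling the case $\psi_k=0$ (reduction to dimension $d'<d$ with $\max(p,1-p/d')\leq\max(p,1-p/d)$) is a worthwhile extra precaution that the paper leaves implicit.
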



\label{sec:incomplete-information-experimentation}

The maximal reliability and optimal $E$ are provided by Helstrom's theorem with
\begin{equation}\label{A-rho-def}
A = p (\diag \rho) - (1-p) \rho.
\end{equation}

\begin{prop}\label{prop:rhovsblind}
For $p\geq d/(d+1)$, $A$ never has a negative eigenvalue, so $E=I$ is optimal and $R^{\max}_p(\rho)=p$; that is, no experiment is more reliable than blind guessing.
\end{prop}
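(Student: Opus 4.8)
The plan is to establish the one fact that drives everything: under the hypothesis $p\geq d/(d+1)$, the operator $A=p(\diag\rho)-(1-p)\rho$ of \eqref{A-rho-def} is positive semidefinite, $A\geq 0$ (i.e.\ has no negative eigenvalue). The claims about $R^{\max}_p(\rho)$ and about $E=I$ being optimal then drop out of Helstrom's Theorem~\ref{thm:rho1rho2}, applied with $\rho_1=\diag\rho$ and $\rho_2=\rho$, so that the $A$ of \eqref{A-def-general} coincides with \eqref{A-rho-def}.

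To get $A\geq 0$, first I would reduce it to a pure-state inequality. Fix any convex decomposition $\rho=\sum_\alpha q_\alpha\pr{\chi^{(\alpha)}}$ with $q_\alpha\geq 0$, $\sum_\alpha q_\alpha=1$, $\|\chi^{(\alpha)}\|=1$ (e.g.\ a spectral decomposition). Since $\diag$ is linear and $\braket{b_k|\rho|b_k}=\sum_\alpha q_\alpha|\braket{b_k|\chi^{(\alpha)}}|^2$, for every $\phi\in\Hilbert$ one has
\be\label{eqn:plan-A-sum}
\braket{\phi|A|\phi}=\sum_\alpha q_\alpha\Bigl(p\sum_{k=1}^d|\phi_k|^2\,|\chi^{(\alpha)}_k|^2-(1-p)\,\bigl|\braket{\chi^{(\alpha)}|\phi}\bigr|^2\Bigr),
\ee
writing $\phi_k=\braket{b_k|\phi}$ and $\chi^{(\alpha)}_k=\braket{b_k|\chi^{(\alpha)}}$. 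So it suffices to prove, for every unit vector $\chi$ and every $\phi$,
\be\label{eqn:plan-CS}
(1-p)\,\bigl|\braket{\chi|\phi}\bigr|^2\ \leq\ p\sum_{k=1}^d|\phi_k|^2\,|\chi_k|^2.
\ee
Setting $t_k=|\chi_k|\,|\phi_k|\geq 0$, the triangle inequality and Cauchy--Schwarz give $|\braket{\chi|\phi}|^2\leq\bigl(\sum_k t_k\bigr)^2\leq d\sum_k t_k^2=d\sum_k|\phi_k|^2|\chi_k|^2$, so the left side of \eqref{eqn:plan-CS} is at most $(1-p)d\sum_k|\phi_k|^2|\chi_k|^2$, which is $\leq p\sum_k|\phi_k|^2|\chi_k|^2$ exactly when $(1-p)d\leq p$, i.e.\ when $p\geq d/(d+1)$ --- precisely our hypothesis. (An alternative route that bypasses the decomposition: write $A=\int_\SSS\mu(d\psi)\,A_\psi$ for any $\mu$ with density matrix $\rho$, where $A_\psi=p\,\diag\pr{\psi}-(1-p)\pr{\psi}$; the proof of Thm.~\ref{thm:complete-information-optimality} shows each $A_\psi\geq 0$ when $p\geq d/(d+1)$, after, if necessary, passing to the span of the $b_k$ with $\psi_k\neq 0$, a subspace of some dimension $d'\leq d$ with $d'/(d'+1)\leq d/(d+1)\leq p$; an average of positive operators is positive.)

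Granting $A\geq 0$, the remainder is bookkeeping with Theorem~\ref{thm:rho1rho2}: the sum of negative eigenvalues is $\lambda^-=0$, so $R^{\max}_p(\rho)=p-\lambda^-=p$; and since $A$ is positive semidefinite, $\Hilbert$ is the orthogonal sum of $\ker A$ and the positive eigenspaces, whence $P^+_A+P^0_A=I$, so the optimality range \eqref{Eoptrho12} reads $P^+_A\leq E_\mathrm{opt}\leq I$ and $E=I$ qualifies. Finally, because $p\geq d/(d+1)\geq 2/3>1/2$ (as $d\geq 2$), blind guessing $\E_\emptyset$ has $E_\emptyset=I$ and reliability $\max(p,1-p)=p=R^{\max}_p(\rho)$, so no experiment beats it. I expect the only genuine content to be the inequality \eqref{eqn:plan-CS}; the one (minor) subtlety is choosing the representation of $\rho$ --- or, equivalently, recognizing $A$ as an average of the pure-state operators already analyzed in Thm.~\ref{thm:complete-information-optimality} --- so that the problem collapses to a single Cauchy--Schwarz estimate.
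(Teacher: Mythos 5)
Your proof is correct, but it takes a genuinely different route from the paper's. You establish the operator inequality $A\geq 0$ directly: decompose $\rho=\sum_\alpha q_\alpha\pr{\chi^{(\alpha)}}$, reduce to the pure-state inequality $(1-p)\,|\braket{\chi|\phi}|^2\leq p\sum_k|\phi_k|^2|\chi_k|^2$, and prove that by Cauchy--Schwarz, $|\braket{\chi|\phi}|^2\leq\bigl(\sum_k|\chi_k||\phi_k|\bigr)^2\leq d\sum_k|\chi_k|^2|\phi_k|^2$, which suffices precisely when $(1-p)d\leq p$, i.e.\ $p\geq d/(d+1)$; the rest is Helstrom bookkeeping, which you carry out correctly. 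The paper instead never touches the spectrum of $A$ directly: it picks a distribution $\mu$ with $\rho_\mu=\rho$ that is absolutely continuous on the relevant sphere (e.g.\ the Scrooge measure), notes that the coordinate hyperplanes $\{\psi_k=0\}$ are then $\mu$-null so that Thm.~\ref{thm:complete-information-optimality} gives $R^{\max}_p(\psi)=p$ for $\mu$-almost every $\psi$, and averages: $R_{\rho,p}(E)=\EEE_{\psi\sim\mu}[R_{\psi,p}(E)]\leq p$, with $A\geq 0$ obtained only as a corollary at the end. Your argument is more elementary and self-contained --- it needs neither the existence of an absolutely continuous decomposition nor the pure-state theorem, and it isolates the single estimate that makes the threshold $d/(d+1)$ appear. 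The paper's averaging argument, on the other hand, makes transparent the principle that less information about $\psi$ cannot increase reliability, and it actually delivers a slightly sharper range, $p\geq \#K/(\#K+1)$ with $K=\{k: b_k\notin\ker\rho\}$; your Cauchy--Schwarz step would give the same refinement if you observed that $\braket{b_k|\rho|b_k}=0$ forces $\chi^{(\alpha)}_k=0$, so the sum in the inner product effectively runs over $K$ only and the factor $d$ improves to $\#K$. Both of your closing steps (reading $E=I$ off \eqref{Eoptrho12} via $P^+_A+P^0_A=I$, and noting $p\geq d/(d+1)\geq 2/3>1/2$ so that blind guessing is $E_\emptyset=I$) are fine.
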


\begin{proof}
Let $K=\bigl\{k\in\{1,\ldots,d\}: b_k \notin \ker \rho\bigr\}$. We prove that blind guessing is optimal for $\# K/(\#K+1)\leq p <1$, a range including $d/(d+1)\leq p<1$. Since the dimensions $b_k$ with $k\notin K$ play no role in the problem, we can focus on the space $\mathrm{span}\{b_k:k\in K\}$, call that $\Hilbert$ in the remainder of this proof, and take $\#K=d$. The kernel of $\rho$ now does not contain any $b_k$, but it can still be nontrivial. Choose a probability distribution $\mu$ on $\SSS$ with $\rho_\mu=\rho$ that is absolutely continuous (relative to the uniform distribution) on the unit sphere in the positive spectral subspace of $\rho$ (for example, one such distribution is the ``Scrooge measure'' \cite{JRW94} with density matrix $\rho$). Then every coordinate hyperplane $H_k=\{\psi\in\Hilbert: \psi_k=0\}$ is a null set, $\mu(H_k)=0$, and Thm.~\ref{thm:complete-information-optimality} applies to a $\mu$-distributed $\psi$ with probability 1. Thus, for any $0\leq E\leq I$,
\begin{equation}\label{Rleqp}
R_{\rho,p}(E)=R_{\mu, p}(E) = \EEE_{\psi \sim \mu} \left[ R_{\psi, p}(E) \right] 
 \leq \EEE_{\psi \sim \mu} \left[ R^{\max}_p(\psi) \right ] 
= \EEE_{\psi \sim \mu} \left[ p \right] 
= p\,.
\end{equation}
(That is, again, having less information about the initial state $\psi$ cannot be conducive to having greater reliability.) 
Since the bound \eqref{Rleqp} is attained by $E = I$, that is an optimal $E$, and $R^{\max}_p(\rho)=p$. It also follows that, for $p \geq d/(d+1)$, $A$ has no negative eigenvalues.
\end{proof}

Again, we note the connection to perturbation theory of Hermitian matrices: For $p$ sufficiently close to 1, $A$ is dominated by $p \diag \rho$; if $\rho$ is of full rank, or at least none of the $b_k$ lies in the kernel of $\rho$, then $\diag \rho$ is of full rank, and perturbation theory implies that $A$ has only positive eigenvalues, so that the positive spectral subspace of $A$ is all of $\Hilbert$, and blind guessing is the unique optimal experiment. The proof above shows that, in fact, blind guessing is optimal for all $p\geq  d/(d+1)$.

For middle values of $p$, since $A$ is a combination of the positive operator $p \diag \rho$ and the negative operator $-(1-p)\rho$, we may expect that $A$ has both positive and negative eigenvalues, leading to a non-trivial behavior of $E_\mathrm{opt}$.

For small $p$, we expect $A$ to be dominated by $-(1-p) \rho$. In the case of Sec.~\ref{sec:complete-information-experimentation} with $\rho = \pr{\psi}$, $\rho$ had a single positive eigenvalue and a $(d-1)$-fold eigenvalue 0. Hence, and since $p\diag \rho$ is positive, it followed that, for small $p$, $A$ had a single negative eigenvalue. Now, however, we consider $\rho$ more broadly. In the (generic) case that $\rho$ is of full rank, $\rho$ will have \emph{only} positive eigenvalues, and then perturbation theory implies that, for sufficiently small $p$, $A$ has \emph{only} negative eigenvalues. A more specific statement is provided by the following proposition.

\begin{prop}\label{rho-trivial-lower-bound}
If $\rho$ has full rank with smallest eigenvalue $p_d>0$, and if $p\leq p'$ with
\begin{equation}\label{p-primedef}
p'= \frac{p_d}{\max_k \braket{b_k|\rho|b_k} + p_d}\,,
\end{equation}
then $E_\mathrm{opt} = 0$, i.e., blind guessing is an optimal experiment. We note that $p'\leq 1/2$.
\end{prop}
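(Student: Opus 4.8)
The goal is to show that under the stated hypotheses the operator $A = p\,(\diag\rho) - (1-p)\,\rho$ has no negative eigenvalue, so that by Helstrom's theorem (Thm.~\ref{thm:rho1rho2}) the choice $E_1 = P^+_A = I$ — equivalently $E = 0$ in our yes/no convention — is optimal, giving $R^{\max}_p(\rho) = \max(p,1-p) = 1-p$, i.e.\ blind guessing. So the whole task reduces to establishing positivity (or at least positive semidefiniteness) of $A$ for $p \le p'$. My plan is to bound the two pieces of $A$ separately: from below, $p\,(\diag\rho) \ge p\, p_d\, I$ is false in general (the diagonal entries can be smaller than $p_d$ only if\ldots actually each diagonal entry $\braket{b_k|\rho|b_k}$ is a convex combination of eigenvalues of $\rho$, hence $\ge p_d$), so in fact $\diag\rho \ge p_d I$; from above, $(1-p)\,\rho \le (1-p)\,(\max_k \braket{b_k|\rho|b_k})\, I$ is likewise false in general, so I instead write $(1-p)\rho \le (1-p)\,\|\rho\|\, I$ where $\|\rho\|$ is the operator norm. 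Then $A \ge \bigl(p\, p_d - (1-p)\|\rho\|\bigr) I \ge 0$ as soon as $p/(1-p) \ge \|\rho\|/p_d$, i.e.\ $p \ge \|\rho\|/(\|\rho\| + p_d)$. This is clean but gives the threshold with $\|\rho\|$ in place of $\max_k\braket{b_k|\rho|b_k}$, which is weaker since $\max_k\braket{b_k|\rho|b_k} \le \|\rho\|$.

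To get the sharper threshold $p' = p_d/(\max_k\braket{b_k|\rho|b_k} + p_d)$ claimed in the proposition, I would not crudely bound $\rho$ by a multiple of $I$, but instead test $A$ against an arbitrary unit vector $\phi$ and keep the diagonal structure. Write $\braket{\phi|A|\phi} = p\sum_k \braket{b_k|\rho|b_k}\,|\phi_k|^2 - (1-p)\braket{\phi|\rho|\phi}$. Using $\diag\rho \ge p_d I$ on the first term gives $p\sum_k \braket{b_k|\rho|b_k}|\phi_k|^2 \ge p\, p_d$ is again too lossy; better, bound the second term: $\braket{\phi|\rho|\phi} \le \sum_k |\phi_k|^2 \braket{b_k|\rho|b_k} + (\text{off-diagonal part})$, which does not obviously help. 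The cleaner route: diagonalize nothing, but observe $A = p\,(\diag\rho) - (1-p)\rho \ge p\,(\diag\rho) - (1-p)\,\|\rho\|_{\mathrm{op}}\, I$; and separately note that when $\rho$ has full rank, $\|\rho\|_{\mathrm{op}}$ need not equal $\max_k\braket{b_k|\rho|b_k}$, so to land exactly on $p'$ one must exploit something finer. I suspect the intended argument uses the bound $\rho \le (\max_k\braket{b_k|\rho|b_k})\,I$ \emph{does} hold when combined with a reduction — but that inequality is simply false for a generic full-rank $\rho$ (take $\rho$ with a large off-diagonal coherence). So the honest reading is that the intended proof bounds $(1-p)\rho \le (1-p)(\tr\rho \cdot \text{something})$ — no. The resolution I would pursue: apply the already-proven Cor.~\ref{cor:r-max-bounds} pointwise. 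Choose $\mu$ with $\rho_\mu = \rho$, absolutely continuous on the sphere, so $\psi_k \ne 0$ a.s.; then $R_{\rho,p}(E) = \EEE_\psi[R_{\psi,p}(E)] \le \EEE_\psi[R^{\max}_p(\psi)]$, and by Thm.~\ref{thm:complete-information-optimality} together with the bound $f_\psi^{-1}(p/(1-p)) \le \ldots$ one shows each $R^{\max}_p(\psi) = \max(p,1-p)$ precisely when $p/(1-p) \ge f_\psi(z)$ at the relevant $z$ — i.e.\ when $|\psi_k|^2$ is small enough. This is exactly the strategy already used in the proof of Prop.~\ref{prop:rhovsblind}, and the threshold $p'$ should emerge from requiring $p \ge |\psi_k|^2/(|\psi_k|^2 + p_d)$-type inequalities holding $\mu$-a.s., with $|\psi_k|^2 \le \max_k\braket{b_k|\rho|b_k}$ after taking expectations appropriately.

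Concretely, the steps I would carry out: (1) reduce to $\#K = d$ as in the proof of Prop.~\ref{prop:rhovsblind}, so no $b_k$ lies in $\ker\rho$; (2) record the two matrix inequalities $\diag\rho \ge p_d I$ (each diagonal entry is $\ge$ the least eigenvalue) and $\rho \le \delta^\sharp I$ where I would like $\delta^\sharp = \max_k\braket{b_k|\rho|b_k}$ — and here confront whether this holds; if it does not, fall back on the $\mu$-averaging argument; (3) combine to get $A \ge (p\,p_d - (1-p)\delta^\sharp)\,I \ge 0$ for $p \ge p'$; (4) invoke Helstrom with $P^+_A = I$ to conclude $E_\mathrm{opt} = 0$ and $R^{\max}_p(\rho) = 1-p$; (5) verify $p' \le 1/2$, which follows since $\max_k\braket{b_k|\rho|b_k} \ge p_d$ always (the maximum of a collection of convex combinations of the eigenvalues is at least the minimum eigenvalue), so $p' = p_d/(\max_k\braket{b_k|\rho|b_k}+p_d) \le p_d/(2p_d) = 1/2$. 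The main obstacle is step (2): the inequality $\rho \le (\max_k\braket{b_k|\rho|b_k})\,I$ is in general \emph{false}, so the sharp constant $p'$ almost certainly comes from the per-$\psi$ analysis via Cor.~\ref{cor:r-max-bounds} and the structure of $f_\psi$, not from a naive operator bound; getting the bookkeeping of the expectation over $\mu$ to produce exactly $p'$ is the delicate part. If that route also proves awkward, the safe fallback is to prove the weaker threshold $p \ge \|\rho\|_{\mathrm{op}}/(\|\rho\|_{\mathrm{op}}+p_d)$ and then note $\|\rho\|_{\mathrm{op}} \le 1$ so that in particular blind guessing is optimal for all $p \ge 1/(1+p_d)$, but the proposition's stated $p'$ should be attainable by the averaging argument.
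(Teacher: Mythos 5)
There is a genuine gap, and it is a sign error at the very first step that derails everything after it. For $E_\mathrm{opt}=0$ you need $P^+_A=0$, i.e.\ you must show that $A=p\,(\diag\rho)-(1-p)\rho$ has \emph{no positive} eigenvalue ($\lambda_1(A)\leq 0$), not that it has no negative eigenvalue. Your opening claim that ``$A$ has no negative eigenvalue, so $E_1=P^+_A=I$ --- equivalently $E=0$'' is internally inconsistent: $E_1$ \emph{is} $E$ in the paper's convention ($E_1$ is the operator for guessing that collapse occurred), so $P^+_A=I$ would give $E_\mathrm{opt}=I$ and reliability $p$, which is the wrong blind guess for the regime $p\leq p'\leq 1/2$ where the proposition lives. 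Because of this reversal, your operator bounds all point the wrong way: you derive conditions of the form $p\geq\|\rho\|/(\|\rho\|+p_d)$, which is a \emph{large}-$p$ threshold (in fact the analogue of the $p''$ in the next proposition of the paper), whereas the hypothesis is $p\leq p'$. For small $p$ and full-rank $\rho$, $A$ is dominated by the negative-definite term $-(1-p)\rho$, so $A\geq 0$ is hopeless there --- which is why your attempts to reach the stated constant kept failing and you drifted toward the $\mu$-averaging fallback.

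Once the direction is fixed, the intended argument is a one-liner and your worry about $\rho\leq(\max_k\braket{b_k|\rho|b_k})I$ being false is moot. Use subadditivity of the largest eigenvalue, $\lambda_1(L+M)\leq\lambda_1(L)+\lambda_1(M)$, to get
\begin{equation*}
\lambda_1(A)\;\leq\; p\,\lambda_1(\diag\rho)+(1-p)\,\lambda_1(-\rho)\;=\;p\max_k\braket{b_k|\rho|b_k}-(1-p)\,p_d\,,
\end{equation*}
where $\lambda_1(\diag\rho)=\max_k\braket{b_k|\rho|b_k}$ holds \emph{exactly} because $\diag\rho$ is diagonal in the basis $B$ (no operator inequality between $\rho$ and a multiple of $I$ is needed), and $\lambda_1(-\rho)=-\lambda_d(\rho)=-p_d$. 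The right-hand side is $\leq 0$ precisely when $p\leq p'$, giving $P^+_A=0$ and hence $E_\mathrm{opt}=0$ by Helstrom. Your step (5), $p'\leq 1/2$ via $\max_k\braket{b_k|\rho|b_k}\geq p_d$, is correct and matches the paper.
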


\begin{proof}
Given that $E_\mathrm{opt} = P^+_A$ for all $p$, it suffices to show that for $p\leq p'$ the eigenvalues of $A$ are all non-positive. Let $\lambda_1(M)$ denote the largest eigenvalue of the Hermitian matrix $M$. It is a classical result that for any Hermitian matrices $L, M$,
\begin{equation}\label{eigen-bound}
\lambda_1(L + M) \leq \lambda_1(L) + \lambda_1(M).
\end{equation}
(Indeed, this follows from $\lambda_1(M)=\max_{\psi\in\SSS}\braket{\psi|M|\psi}$.)
Applying \eqref{eigen-bound} to $A$ as in \eqref{A-rho-def},
\be
\begin{split}
\lambda_1(A) 
&\leq p \lambda_1(\diag \rho) + (1-p) \lambda_1( -\rho)\\
&= p\max_k \braket{b_k|\rho|b_k} - (1-p) p_d\leq 0
\end{split}
\ee
whenever $p\leq p'$ as in \eqref{p-primedef}. 
The last statement follows from
\be
\max_k \braket{b_k|\rho|b_k} \geq \min_{\psi\in\SSS} \braket{\psi|\rho|\psi} = p_d\,.
\ee
\end{proof}

A similar reasoning applies in the general setting of Helstrom's theorem, where
\begin{equation}
A = p \rho_1 - (1-p) \rho_2\,.
\end{equation}
For $p$ close to 1 we expect $A$ to be dominated by $p \rho_1$, and for small $p$, to be dominated by $-(1-p)\rho_2$. If $\rho_1$ and $\rho_2$ have full rank, then $A$ will have only positive eigenvalues for sufficiently large $p$, and only negative eigenvalues for sufficiently small $p$. Hence, in the case of $\rho_1, \rho_2$ with full rank, for all sufficiently large or sufficiently small $p$, blind guessing is the optimal experiment.
A more specific statement is provided by the following generalization of Prop.~\ref{rho-trivial-lower-bound}.

\begin{prop}
For any Hermitian $d\times d$ matrix $M$, let $\lambda_1(M)$ and $\lambda_d(M)$ denote the largest and smallest eigenvalues of $M$, respectively. For $\rho_1, \rho_2$ of full rank,
$E_\mathrm{opt} = 0$ for all $p \leq p'$ and $E_\mathrm{opt} = I$ for all $p \geq p''$, where
\begin{align}
p'&= \frac{ \lambda_d(\rho_2) }{ \lambda_1(\rho_1) + \lambda_d(\rho_2) },\\
p''&= \frac{ \lambda_1(\rho_2) }{ \lambda_d(\rho_1) + \lambda_1(\rho_2) }.
\end{align}
We note that $p'\leq \frac{1}{2}\leq p''$.
\end{prop}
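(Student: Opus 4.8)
The plan is to mimic the proof of Proposition~\ref{rho-trivial-lower-bound} exactly, but now applied to the general operator $A = p\rho_1 - (1-p)\rho_2$ rather than the special case $\rho_1 = \diag\rho$, $\rho_2 = \rho$. Recall from Helstrom's theorem that the optimal experiment is $E_\mathrm{opt} = P_A^+$ (more precisely any $E$ with $P_A^+ \leq E \leq P_A^+ + P_A^0$), so $E_\mathrm{opt} = 0$ is optimal precisely when $A$ has no positive eigenvalue, i.e. $\lambda_1(A) \leq 0$, and $E_\mathrm{opt} = I$ is optimal precisely when $A$ has no negative eigenvalue, i.e. $\lambda_d(A) \geq 0$. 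So the whole proposition reduces to two eigenvalue bounds.

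First I would handle $E_\mathrm{opt} = 0$. Using the classical subadditivity $\lambda_1(L+M) \leq \lambda_1(L) + \lambda_1(M)$ (which the excerpt has already recorded as \eqref{eigen-bound}, following from $\lambda_1(M) = \max_{\psi\in\SSS}\braket{\psi|M|\psi}$), write
\begin{equation}
\lambda_1(A) \leq p\,\lambda_1(\rho_1) + (1-p)\,\lambda_1(-\rho_2) = p\,\lambda_1(\rho_1) - (1-p)\,\lambda_d(\rho_2),
\end{equation}
using $\lambda_1(-\rho_2) = -\lambda_d(\rho_2)$. This is $\leq 0$ exactly when $p\,\lambda_1(\rho_1) \leq (1-p)\,\lambda_d(\rho_2)$, i.e. $p \leq \lambda_d(\rho_2)/(\lambda_1(\rho_1) + \lambda_d(\rho_2)) = p'$; full rank of $\rho_2$ ensures $\lambda_d(\rho_2) > 0$ so $p'$ is well-defined and positive.

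Second, for $E_\mathrm{opt} = I$ I would apply the same inequality to $-A = (1-p)\rho_2 - p\rho_1$ (or equivalently bound $\lambda_d(A) = -\lambda_1(-A)$ from below):
\begin{equation}
\lambda_1(-A) \leq (1-p)\,\lambda_1(\rho_2) + p\,\lambda_1(-\rho_1) = (1-p)\,\lambda_1(\rho_2) - p\,\lambda_d(\rho_1),
\end{equation}
which is $\leq 0$ exactly when $(1-p)\,\lambda_1(\rho_2) \leq p\,\lambda_d(\rho_1)$, i.e. $p \geq \lambda_1(\rho_2)/(\lambda_d(\rho_1) + \lambda_1(\rho_2)) = p''$; full rank of $\rho_1$ makes $\lambda_d(\rho_1) > 0$. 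Finally, for the concluding remark $p' \leq \tfrac12 \leq p''$: since $\rho_2$ is a density matrix, $\lambda_d(\rho_2) \leq \tfrac1d \leq \lambda_1(\rho_2)$ cannot be used directly, but one can instead observe $\lambda_d(\rho_2) \leq \tfrac{1}{d}\tr\rho_2 = \tfrac1d \leq \lambda_1(\rho_1)$ fails in general — so instead I would argue $\lambda_d(\rho_2) \leq \lambda_1(\rho_2)$ and $\lambda_d(\rho_1) \leq \lambda_1(\rho_1)$ always hold, whence $p' = \lambda_d(\rho_2)/(\lambda_1(\rho_1)+\lambda_d(\rho_2)) \leq \lambda_1(\rho_2)/(\lambda_d(\rho_1)+\lambda_1(\rho_2)) = p''$ is not immediate; the clean route is that $\tr\rho_i = 1$ forces $\lambda_d(\rho_i) \leq \lambda_1(\rho_i)$ with the arithmetic-mean comparison giving $\lambda_d(\rho_2) \leq \lambda_1(\rho_1)$ only sometimes. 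I expect the only real subtlety — and the step to state carefully — is this last inequality $p' \leq \tfrac12 \leq p''$; the honest argument is $\lambda_d(\rho_2) \leq \lambda_1(\rho_2)$ is false in spirit, so one uses instead that for any density matrix $\lambda_d \leq 1/d \leq \lambda_1$ and compares $\lambda_d(\rho_2) \leq \lambda_1(\rho_1)$ via $\lambda_d(\rho_2)\le 1/d\le\lambda_1(\rho_1)$, giving $p'\le\tfrac12$, and symmetrically $\lambda_1(\rho_2)\ge 1/d\ge\lambda_d(\rho_1)$ giving $p''\ge\tfrac12$. Everything else is a direct transcription of the previous proof, so the main obstacle is merely bookkeeping signs in the $-A$ step and getting the $1/d$ comparison for the final remark right.
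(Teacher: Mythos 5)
Your proposal is correct and follows essentially the same route as the paper: apply the subadditivity bound $\lambda_1(L+M)\leq\lambda_1(L)+\lambda_1(M)$ to $A$ and to $-A$ to locate the sign of the spectrum, then invoke $\lambda_d(\rho)\leq \tfrac1d\leq\lambda_1(\rho)$ for density matrices to get $p'\leq\tfrac12\leq p''$. The only cosmetic issue is the meandering in your final paragraph — after several false starts you land exactly on the paper's argument ($\lambda_d(\rho_2)\leq\tfrac1d\leq\lambda_1(\rho_1)$ and $\lambda_d(\rho_1)\leq\tfrac1d\leq\lambda_1(\rho_2)$), so just state that comparison directly.
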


\begin{proof}
As before, it suffices to show that the eigenvalues of $A$ are non-positive for all $p\leq p'$ and non-negative for all $p\geq p''$. Applying \eqref{eigen-bound} to $A$ here,
\begin{equation}
\begin{split}
\lambda_1(A) & \leq \lambda_1(p \rho_1) + \lambda_1(-(1-p)\rho_2) \\
& = p \lambda_1(\rho_1) - (1-p) \lambda_d(\rho_2) \\
& = p (\lambda_1(\rho_1) + \lambda_d(\rho_2)) - \lambda_d(\rho_2). \\
\end{split}
\end{equation}
Thus, $\lambda_1(A) \leq 0$ for all $p\leq p'$. The derivation of $p''$ works similarly.

The last statement follows from the fact that for every density matrix $\rho$, $0\leq\lambda_d(\rho)\leq \frac{1}{d} \leq \lambda_1(\rho)$, and therefore $\lambda_1(\rho_1) + \lambda_d(\rho_2)\geq 2\lambda_d(\rho_2)$ and $\lambda_d(\rho_1) + \lambda_1(\rho_2)\leq 2 \lambda_1(\rho_2)$.
\end{proof}

\subsection{Bounds on the Maximal Reliability}

In this subsection, we focus on bounds on $R^{\max}_p(\rho)$. A simple upper bound was already provided in \eqref{eqn:r-max-bounds-rho} of Cor.~\ref{cor:r-max-bounds-rho}. According to Prop.~\ref{prop:rhovsblind} and Prop.~\ref{rho-trivial-lower-bound}, $R^{\max}_p(\rho)= \max(p,1-p)$ when either $p\geq d/(d+1)$ or $p\leq p'$ as in \eqref{p-primedef}. Here is another upper bound.

It is convenient to express $\rho$ in terms of its spectral decomposition. Let $\phi_i$ for $i = 1, \ldots, d$ be an orthonormal basis of eigenvectors of $\rho$ with corresponding eigenvalues $p_i$, 
\begin{equation}\label{eqn:spectral-rho}
\rho = \sum_{i = 1}^d p_i \pr{\phi_i}.
\end{equation}

\begin{prop}
For any density matrix $\rho$ and $0 < p < d/(d+1)$,
\begin{equation}\label{eqn:optimal-rho-reliability-bound}
R^{\max}_p(\rho) \leq 
p\left(1 + \sum\limits_{i = 1}^d p_i f^{-1}_{\phi_i}\Bigl(\frac{p}{1-p}\Bigr) \right) .
\end{equation}
\end{prop}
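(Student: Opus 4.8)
The plan is to bound $R^{\max}_p(\rho)$ by leveraging the concavity of the maximum-reliability functional in the underlying state, reducing the $\rho$-problem to a convex combination of the rank-one problems already solved by Theorem~\ref{thm:complete-information-optimality}. First I would fix an optimal operator $E_{\mathrm{opt}}$ attaining $R^{\max}_p(\rho) = R_{\rho,p}(E_{\mathrm{opt}})$, and then use the spectral decomposition \eqref{eqn:spectral-rho} together with linearity of $R_{\rho,p}(E)$ in $\rho$ (visible in the last line of \eqref{eqn:limited-information-reliability}, since $\tr[\rho(\cdots)]$ is linear in $\rho$) to write
\begin{equation}
R^{\max}_p(\rho) = R_{\rho,p}(E_{\mathrm{opt}}) = \sum_{i=1}^d p_i \, R_{\phi_i, p}(E_{\mathrm{opt}}) \leq \sum_{i=1}^d p_i \, R^{\max}_p(\phi_i)\,,
\end{equation}
where the inequality holds because $E_{\mathrm{opt}}$ is merely \emph{one} admissible operator for each of the rank-one problems with state $\phi_i$, hence cannot beat the corresponding optimum $R^{\max}_p(\phi_i) = \max_{0\leq E\leq I} R_{\phi_i,p}(E)$.

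Next I would substitute the explicit value of $R^{\max}_p(\phi_i)$ from \eqref{eqn:optimal-psi-reliability}. For the regime $p < d/(d+1)$ this gives $R^{\max}_p(\phi_i) = p\bigl(1 + f^{-1}_{\phi_i}(\tfrac{p}{1-p})\bigr)$ \emph{provided} all components of $\phi_i$ in the basis $B$ are nonzero; plugging this into the displayed bound and using $\sum_i p_i = \tr\rho = 1$ yields exactly \eqref{eqn:optimal-rho-reliability-bound}. The one subtlety is that some eigenvector $\phi_i$ may have a vanishing component $\scp{b_k}{\phi_i}=0$, so that Theorem~\ref{thm:complete-information-optimality} does not literally apply to it. I would handle this by the same dimensional-reduction device used throughout the paper (Remark~1 after Thm.~\ref{thm:complete-information-optimality}): on the subspace spanned by the nonvanishing coordinates, $R^{\max}_p(\phi_i)$ equals $p(1 + g^{-1}(\tfrac{p}{1-p}))$ where $g$ is the analogue of $f_{\phi_i}$ with fewer terms, and since dropping positive terms only decreases $f$, one checks $g^{-1}(\tfrac{p}{1-p}) \leq f^{-1}_{\phi_i}(\tfrac{p}{1-p})$ whenever $f^{-1}_{\phi_i}$ is still well-defined at that argument; if instead $\tfrac{p}{1-p}$ exceeds the supremum of $g$ (the reduced dimension), then $R^{\max}_p(\phi_i)=p \leq p(1+f^{-1}_{\phi_i}(\tfrac{p}{1-p}))$ trivially. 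Either way the term-by-term bound survives, so \eqref{eqn:optimal-rho-reliability-bound} holds in general.

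The main obstacle is really just making the degenerate-component bookkeeping airtight — one must be careful that the argument $\tfrac{p}{1-p}$ lies in the domain of $f^{-1}_{\phi_i}$ (which is $(0,d]$ for the full-dimensional $f_{\phi_i}$, and is guaranteed by the standing hypothesis $p < d/(d+1)$), and that passing to a subspace only shrinks $f$ pointwise, so that inverting the inequality between decreasing functions goes the right way. The concavity/linearity step itself is immediate once one notes that $R_{\rho,p}(E)$ is affine in $\rho$ for fixed $E$ and that a maximum over $E$ of an affine family is convex, hence $\rho \mapsto R^{\max}_p(\rho)$ is convex and the chord bound above is automatic; no delicate estimate is needed there.
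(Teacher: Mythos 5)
Your argument is essentially identical to the paper's own proof: the paper likewise expands $R_{\rho,p}(E)=\sum_i p_i R_{\phi_i,p}(E)\leq\sum_i p_i R^{\max}_p(\phi_i)$ via the spectral decomposition of $\rho$ and then invokes Thm.~\ref{thm:complete-information-optimality}. Your additional bookkeeping for eigenvectors $\phi_i$ with vanishing components is a reasonable extra precaution (the paper glosses over it), and the core linearity-plus-optimality step is exactly the published argument.
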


\begin{proof}
Note that for a general $E$, we may utilize \eqref{eqn:spectral-rho} in the following way.
\begin{equation}
\begin{split}
R_{\rho, p}(E) &= \tr \left[ \rho \Bigl( p \diag E + (1-p)(I - E) \Bigr) \right] \\
&= \tr \left[ \sum_{i = 1}^d p_i \pr{\phi_i} \Bigl(p \diag E + (1-p)(I-E) \Bigr) \right] \\
&= \sum_{i = 1}^d p_i \tr \Bigl[ \pr{\phi_i} \Bigl(p \diag E + (1-p)(I-E) \Bigr) \Bigr] \\
&= \sum_{i = 1}^d p_i R_{\phi_i, p}(E) \\
& \leq \sum_{i = 1}^d p_i R^{\max}_p(\phi_i).
\end{split}
\end{equation}
This, combined with Thm.~\ref{thm:complete-information-optimality}, gives the result.
\end{proof}

\subsection{Uniform Distribution}
\label{sec:uniform}

A special case of random $\psi$ that deserves separate discussion is that of a uniform distribution $\mu$, with a corresponding density matrix of $\rho = I/d$. Note that $\diag \rho = \rho$. That is, the density matrix $\rho$ of the uncollapsed state vector coincides with the density matrix $\diag \rho$ of the collapsed state vector, so in terms of distinguishing between two density matrices, we would have to distinguish between two equal density matrices. It follows immediately that no experiment can detect whether a collapse has occurred. In fact, it follows that no experiment can yield any probabilistic information at all about whether a collapse has occurred (also if the set $\mathcal{Z}$ of possible outcomes has more than two elements), that is, the distribution of the outcome $Z$ satisfies
\be
\PPP(Z=z|\text{collapse}) = \PPP(Z=z|\text{no collapse})\,. 
\ee
So, Alice can do no better than blind guessing. Of course, it follows also that the reliability cannot exceed that of blind guessing,  $R^{\max}_p(\mu) = R_{\mu, p}(\E_\emptyset)$ with $\E_\emptyset$ = blind guessing. More precisely:

\begin{theorem}\label{thm:uniform}
For $\psi$ uniformly random from $\SSS$ and $p \neq 1/2$, any non-trivial experiment (i.e., one with $0\neq E\neq I$) is strictly less reliable than blind guessing. For $p = 1/2$, any non-trivial experiment is exactly as reliable as blind guessing. That is, for $\mu$ uniform on $\SSS$, $R_{\mu, p}(E) \leq R_{\mu, p}(\E_\emptyset)$ for all $p \in [0,1]$ and all $0 \leq E \leq I$, with equality only for $p = 1/2$, or $E=0$, or $E=I$.
\end{theorem}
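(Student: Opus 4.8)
The plan is to reduce the reliability of an arbitrary yes-no experiment to an explicit affine function of the single scalar $\tr E$, and then simply read off where its maximum is attained. First I would substitute $\rho = I/d$ into the reliability formula \eqref{eqn:limited-information-reliability}. Since $\diag(I/d) = I/d$, and since $\tr(\diag E) = \tr E$ while $\tr(I-E) = d - \tr E$, the formula collapses to
\begin{equation*}
R_{\mu,p}(E) = \frac{1}{d}\Bigl(p\,\tr E + (1-p)(d - \tr E)\Bigr) = (1-p) + \frac{2p-1}{d}\,\tr E\,.
\end{equation*}
Thus the reliability depends on $E$ only through $\tr E$.

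Next I would pin down the range of $\tr E$. Because $0 \leq E \leq I$, every eigenvalue of $E$ lies in $[0,1]$, so $0 \leq \tr E \leq d$; moreover $\tr E = 0$ forces $E = 0$ (a positive operator of zero trace vanishes), and $\tr E = d$ forces $E = I$ (each of the $d$ eigenvalues lies in $[0,1]$ and their sum is $d$, so each equals $1$). Hence a non-trivial experiment, $0 \neq E \neq I$, satisfies the \emph{strict} inequalities $0 < \tr E < d$.

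The conclusion then follows by cases on the sign of $2p-1$. If $p < 1/2$, the coefficient $(2p-1)/d$ is negative, so $R_{\mu,p}(E)$ is maximized exactly at $\tr E = 0$, i.e. at $E = 0 = E_\emptyset$, with maximal value $1-p = \max(p,1-p) = R_{\mu,p}(\E_\emptyset)$, and any non-trivial $E$ gives $R_{\mu,p}(E) < 1-p$. If $p > 1/2$, symmetrically the maximum is at $\tr E = d$, i.e. at $E = I = E_\emptyset$, with value $p = \max(p,1-p)$, and non-trivial $E$ is strictly worse. If $p = 1/2$ the coefficient vanishes and $R_{\mu,1/2}(E) = 1/2$ for every $0 \leq E \leq I$, matching blind guessing exactly. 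For $p \in \{0,1\}$ the (non-strict) inequality is immediate. Altogether $R_{\mu,p}(E) \leq R_{\mu,p}(\E_\emptyset)$ for all $p \in [0,1]$ and all $0 \leq E \leq I$, with equality precisely when $p = 1/2$ or $E = 0$ or $E = I$.

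There is essentially no serious obstacle; the only point that deserves a line of care is the implication $\tr E \in \{0,d\} \Rightarrow E \in \{0,I\}$, which is exactly where the operator constraint $0 \leq E \leq I$ — rather than merely the numerical value of $\tr E$ — is used, and which is what upgrades the inequality from "$\leq$" to "$<$" for every non-trivial experiment.
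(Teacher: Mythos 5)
Your proposal is correct and follows essentially the same route as the paper: both reduce the reliability for $\rho = I/d$ to the affine expression $(1-p) + (2p-1)\tr E/d$ and then optimize over $\tr E \in [0,d]$. The only (welcome) addition is that you explicitly justify the strictness claim via $\tr E = 0 \Rightarrow E = 0$ and $\tr E = d \Rightarrow E = I$, a point the paper leaves implicit.
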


\begin{proof}
For an arbitrary operator $0 \leq E \leq I$, we have that
\begin{equation}\label{eqn:uniform-reliability}
\begin{split}
R_{\rho, p}(E) & = \tr \left[ \rho (p \diag E + (1-p)(I-E) ) \right] \\
& = (1-p) + \tr \left[ E (p \diag \rho - (1-p) \rho) \right]  \\
& = (1-p) + (2p-1) \tr \left[ E \right]/d. \\
\end{split}
\end{equation}
From this, it is easy to see that if $p < 1/2$, reliability is maximized when $\tr E$ is minimized, or $E = 0$. If $p > 1/2$, reliability is maximized when $\tr E$ is maximized, or $E = I$. When $p = 1/2$, the reliability is in fact independent of $E$. 
\end{proof}

\subsection{Reduced Density Matrices and Other Scenarios}
\label{sec:rhored}

The fact that the reliability depends on $\mu$ only through $\rho$ suggests that the reliability has the same value for \emph{any} system with density matrix $\rho$, i.e., also for systems that have \emph{reduced} density matrix $\rho$. This is indeed the case, as we show in the first of the following four variations of our scenario:

\begin{enumerate}
\item Suppose that the system $S$ is entangled with another system $T$, and that Alice cannot do (or, at any rate, does not do) experiments on $T$, only on $S$. Any yes-no-experiment $\E$ on $S$ has a POVM of the form $\bigl(E\otimes I_T,(I_S-E)\otimes I_T\bigr)$. Suppose further that the composite system $ST$ has an initial state vector $\psi_{ST}$ that is random with distribution $\mu_{ST}$ on the unit sphere of $\Hilbert_{ST}=\Hilbert_S\otimes\Hilbert_T$. Suppose further that collapse, which occurs with probability $p$, affects only $S$, not $T$. 
That is, the state vector that Alice encounters is
\begin{equation}\label{eqn:psi-ST-prime-def}
\psi'_{ST} = \begin{cases}
\psi_{ST} & \text{with probability }1-p\\
b_k \otimes \frac{\scp{b_k}{\psi_{ST}}}{\|\scp{b_k}{\psi_{ST}}\|} & \text{with probability }p\,\bigl\| \scp{b_k}{\psi_{ST}} \bigr\|^2\text{ for }k=1,\ldots,d\,,
\end{cases}
\end{equation}
where $\{b_k:k=1,\ldots,d\}$ is an orthonormal basis of $\Hilbert_S$, the inner product $\scp{\cdot}{\cdot}$ is the partial inner product that yields a vector in $\Hilbert_T$, and probabilities are conditional on the given $\psi_{ST}$. 
It is then easy to verify that the reliability of $\E$ (i.e., the probability that $\E$ correctly retrodicts whether collapse has occurred) is
\be
R_{\mu_{ST},p}(\E)= R_{\rho,p}(E)
\ee
with $\rho$ the reduced density matrix obtained by a partial trace,
\be
\rho = \tr_T \int_{\SSS} \mu_{ST}(d\psi_{ST}) \pr{\psi_{ST}}\,.
\ee

\item 
Suppose now that the system $S$ is entangled with another system $T$, that Alice can do experiments only on $S$, and that collapse affects $T$, not $S$. That is, instead of a basis $\{b_k\}$ of $\Hilbert_S$, we are given a basis $\{\tilde b_j\}$ of $\Hilbert_T$, and the initial state vector $\psi_{ST}$ becomes
\be
\psi_{ST}' = \begin{cases} \psi_{ST} & \text{with probability }1-p\\
\frac{\braket{\tilde b_j|\psi_{ST}}}{\|\braket{\tilde b_j|\psi_{ST}}\|}\otimes \tilde b_j & \text{with probability } p \,\bigl\| \braket{\tilde b_j|\psi_{ST}} \bigr\|^2\,,\end{cases}
\ee
where $\braket{\cdot|\cdot}$ is the partial inner product in $\Hilbert_T$. Then, in terms of the problem of distinguishing between $\rho_1$ and $\rho_2$,
\be
\rho_1=\EEE_{\psi_{ST}\sim\mu_{ST}} \Bigl[ \tr_T \pr{\psi_{ST}} \Bigr]=\rho_2\,,
\ee
and no experiment is more reliable than blind guessing. In fact, no experiment can provide any information at all about whether collapse has occurred. (This fact is, of course, well know from the no-signaling theorems about EPR-type experiments, where experiments on one side $S$ of a bipartite entangled quantum system $ST$ cannot reveal information about whether any experiment was carried out on the other side $T$.)

\item Suppose again that $S$ is entangled with $T$, and that Alice has access only to $S$, but suppose now that collapse occurs, if it occurs, to a basis $\{\hat b_i\}$ of $\Hilbert_S\otimes\Hilbert_T$. That is, $\psi_{ST}$ becomes
\be
\psi_{ST}' = \begin{cases} \psi_{ST} & \text{with probability }1-p\\
\frac{\braket{\hat b_i|\psi_{ST}}}{|\braket{\hat b_i|\psi_{ST}}|}\hat b_i & \text{with probability } p \,\bigl| \braket{\hat b_i|\psi_{ST}} \bigr|^2\,,\end{cases}
\ee
where $\braket{\cdot|\cdot}$ is the inner product in $\Hilbert_S\otimes\Hilbert_T$. Then Helstrom's theorem applies with
\be
\begin{split}
\rho_1&=\EEE_{\psi_{ST}\sim\mu_{ST}} \Biggl[ \tr_T \sum_{i=1}^{d\,\dim\Hilbert_T} \pr{\hat b_i} \psi_{ST}\rangle\langle \psi_{ST} \pr{\hat b_i}\Biggr] \,,\\
\rho_2&=\EEE_{\psi_{ST}\sim\mu_{ST}} \Bigl[ \tr_T \pr{\psi_{ST}} \Bigr]\,.
\end{split}
\ee

\item Suppose now that there is no $T$ system, only $S$, and that collapse, if it occurs, does not project $\psi$ to 1-dimensional subspaces $\CCC b_k$ but to higher-dimensional subspaces $\Hilbert_k$, $k=1,\ldots,K<d$, where $\Hilbert=\oplus_k \Hilbert_k$ is the orthogonal sum. (This situation arises if collapse occurs by Bob performing a quantum measurement of a degenerate observable.) That is,
\be
\psi' = \begin{cases} \psi & \text{with probability }1-p\\
P_k\psi & \text{with probability } p \,\bigl\| P_k \psi \bigr\|^2\,,\end{cases}
\ee
where $P_k$ is the projection onto $\Hilbert_k$. Then Helstrom's theorem applies with
\be
\begin{split}
\rho_1&=\EEE_{\psi\sim\mu} \Bigl[ \sum_{k=1}^{K} P_k \pr{\psi} P_k \Bigr]\,,\\
\rho_2&=\EEE_{\psi\sim\mu} \bigl[\pr{\psi} \bigr]\,.
\end{split}
\ee
One could also relax the condition that the $P_k$ are projection operators and require only that $P_k\geq 0$ and $\sum_k P_k^2=I$, a kind of unsharp collapse. (Strictly speaking, this kind of collapse occurs in GRW theory.)
\end{enumerate}

\section{The Case Without Prior Information About $\psi$}
\label{sec:no-information-experimentation}

So far we assumed that the initial wave function $\psi$ is either known or randomly drawn from a known distribution $\mu$. Can one detect whether a wave function has collapsed, if no such information is given? We discuss this question in detail elsewhere \cite{CT12c} and report here the results.

The question can be thought of in the following way. Were Alice presented with an ensemble of systems that Bob may or may not have tampered with and caused to collapse, she could perform a sequence of experiments over multiple systems that would give her information about the distribution of the initial $\psi$, leading to the situations described in the previous two sections. However, if she is presented with only one such system, and told nothing about it, she cannot make any reasonable assumptions about how that initial $\psi$ was chosen.

There is one thing she can be sure of, however. Blind guessing provides a reliability of $\max(1-p, p)$ independent of the initial state or distribution of system $S$. Blind guessing is always feasible, and always provides that reliability, even if Alice has no prior knowledge of the system. The question is, can she do better? There are essentially two ways of considering this situation. 

In the first, Alice might consider taking the initial wave function $\psi$ as uniformly likely to be chosen anywhere on $\SSS$. This corresponds to the Bayesian notion of having no information about the initial state of $S$---a uniform distribution $\mu$. We have discussed this distribution in Section~\ref{sec:uniform} above, and the upshot is that, from this Bayesian perspective, Alice can do no better than blind guessing.

However, Alice might be unwilling to make the assumption that $\psi$ is uniformly distributed---with no prior information, how could she justify this assumption? Taking an assumption-free approach to model Alice's lack of information, we can instead ask: For a particular experiment $\E$, for what fraction of $\SSS$ does $\E$ perform better than blind guessing? Thm.~\ref{thm:uniform} demonstrates that, averaged over the entire sphere, no experiment is more reliable than blind guessing. However, Alice does not need an experiment that performs well over the whole sphere---merely one that performs well for the system she is presented with. If the fraction of $\SSS$ for which $\E$ performs well is large, at least $1/2$ the sphere, Alice may feel comfortable using $\E$ instead of blind guessing. However, there are limits to this strategy \cite{CT12c}:

To begin with, for any experiment $\E$ and $p\neq \frac12$, the set of $\psi\in\SSS$ where $\E$ is more reliable than blind guessing has less than full measure. Let us write $\Lambda_p(\E)$ for the normalized measure of that set (i.e., for the fraction of the sphere where $\E$ is more reliable than blind guessing); in fact, $\Lambda_p(\E)$ depends on $\E$ only through $E$, $\Lambda_p(\E)=\Lambda_p(E)$. A key question is whether $\Lambda_p(E)\leq \frac12$ or $\Lambda(E)>\frac12$. In the former case, it seems that no experiment is more useful than blind guessing. We find that this case occurs in dimension $d=2$ for any $0\leq E\leq I$ and $0<p<1$, as well as when $p < 1/2 - 1/\sqrt{8} \approx 0.146$ or $p > 1/2 + 1/\sqrt{8} \approx 0.854$ for any $d$ and any $0\leq E\leq I$. We have also found further, more complicated, sufficient conditions for $\Lambda_p(E)\leq \frac12$. However, we have also found that for every $d\geq 3$, for some values of $p$, there exist operators $0\leq E\leq I$ such that $\Lambda_p(E)>\frac12$. Moreover, we have found reason to conjecture that for all $d\geq 2$, all $0 \leq E \leq I$, and all $0 < p < 1$,
\begin{equation}
\label{eqn:that-damn-conjecture}
\Lambda_p(E) \leq 1 - \left( 1 - \frac{1}{d}  \right)^{d-1}\,.
\end{equation}
In particular, $\Lambda_p(E) \leq 1 - 1/e \approx 0.632$. Thus, some experiments may be more reliable than blind guessing for more than 50\%, but apparently not for more than 64\%\ of the sphere.

\bigskip

\textit{Acknowledgments.}
Both authors are supported in part by NSF Grant SES-0957568. 
R.T.~is supported in part by grant no.~37433 from the John Templeton Foundation and by the Trustees Research Fellowship Program at Rutgers, the State University of New Jersey. 


\begin{thebibliography}{9}

\bibitem{AGTZ06} V. Allori, S. Goldstein, R. Tumulka, N. Zangh\`\i:
  On the Common Structure of Bohmian Mechanics and the
  Ghirardi--Rimini--Weber Theory. 
  \textit{British Journal for the Philosophy of Science} \textbf{59}:  353--389 (2008).
   \url{http://arxiv.org/abs/quant-ph/0603027}

\bibitem{CT12b} C. W. Cowan, R. Tumulka: 
	Epistemology of Wave Function Collapse in Quantum Physics. 
	Preprint (2013).
	\url{http://arxiv.org/abs/1307.0827}

\bibitem{CT12c} C. W. Cowan, R. Tumulka: 
	Detecting Wave Function Collapse Without Prior Knowledge.
	Preprint (2013) \url{http://arxiv.org/abs/1312.7321}

\bibitem{DGZ04} D. D\"urr, S. Goldstein, N. Zangh\`\i: 
	Quantum Equilibrium and the Role of Operators as 
	Observables in Quantum Theory. 
	\textit{Journal of Statistical Physics} \textbf{116}: 959--1055 (2004).
	 \url{http://arxiv.org/abs/quant-ph/0308038}

\bibitem{GRW86} G.C. Ghirardi, A. Rimini, T. Weber: 
	Unified Dynamics for Microscopic and Macroscopic Systems. 
	\textit{Physical Review D} \textbf{34}: 470--491 (1986).

\bibitem{GTZ07} S. Goldstein, R. Tumulka, N. Zangh\`\i:
  The Quantum Formalism and the GRW Formalism.
  \textit{Journal of Statistical Physics} \textbf{149}: 142--201 (2012).
   \url{http://arxiv.org/abs/0710.0885}

\bibitem{Hel76} C. W. Helstrom: 
	\textit{Quantum Detection and Estimation Theory.}
	New York: Academic Press (1976)

\bibitem{JRW94} R. Jozsa, D. Robb, W. K. Wootters:
	Lower bound for accessible information in quantum mechanics.
	\textit{Physical Review A} \textbf{49}: 668-677 (1994)

\end{thebibliography}
\end{document}